\newcommand{\Bernoulli}{\mathrm{Bernoulli}}
\newcommand{\InvGamma}{\mathrm{InvGamma}}
\newcommand{\lp}[1][p]{L^{#1}} %
\providecommand{\Diag}{\mathop\mathrm{Diag}}
\newcommand{\defeq}{\triangleq}
\newcommand\mydots{\makebox[0.8em][c]{.\hfil.\hfil.}}
\newcommand{\sss}{$\mathrm{S}^3$\xspace}
\newcommand{\EJ}{\mathbb{E}_J}
\newcommand{\VJ}{{\rm Var}_J}
\newcommand{\CJ}{{\rm Cov}_J}
\newcommand{\cJ}{{\rm corr}_J}
\newcommand{\minus}{\scalebox{0.5}[1.0]{$-$}}
\theoremstyle{plain}
\newtheorem{theorem}{Theorem}[section]
\newtheorem{proposition}[theorem]{Proposition}
\theoremstyle{definition}
\theoremstyle{remark}
\icmltitlerunning{Scalable Spike-and-Slab}
\begin{document}

\opt{arxiv}{
\title{Scalable Spike-and-Slab}
  \author{Niloy Biswas 
  \thanks{niloy\_biswas@g.harvard.edu} 
  \\
    Harvard University \\
    \and
    Lester Mackey \thanks{lmackey@microsoft.com} \\
    Microsoft Research New England \\
    \and
    Xiao-Li Meng \thanks{meng@stat.harvard.edu} \\
    Harvard University
    }
  \maketitle
}

\opt{icml}{
\twocolumn[
\icmltitle{Scalable Spike-and-Slab 
}

\icmlsetsymbol{equal}{*}

\begin{icmlauthorlist}
\icmlauthor{Niloy Biswas}{h}
\icmlauthor{Lester Mackey}{msr}
\icmlauthor{Xiao-Li Meng}{h}
\end{icmlauthorlist}

\icmlaffiliation{h}{Department of Statistics, Harvard University}
\icmlaffiliation{msr}{Microsoft Research New England}

\icmlcorrespondingauthor{Niloy Biswas}{niloy\_biswas@g.harvard.edu}
\icmlkeywords{Machine Learning, ICML}

\vskip 0.3in
]

\printAffiliationsAndNotice{}  %
}

\begin{abstract}
Spike-and-slab priors are commonly used for Bayesian variable selection, due to 
their interpretability and favorable statistical properties. 
However, existing samplers for spike-and-slab posteriors incur prohibitive computational costs when the number of variables is large.
In this article, we propose \emph{Scalable Spike-and-Slab} (\sss), a 
scalable Gibbs sampling implementation for high-dimensional 
Bayesian regression with the continuous spike-and-slab 
prior of \citet{george1993variableJASA}. 
For a dataset with $n$ observations and $p$ covariates, \sss 
has order $\max \{ n^2 p_t, np \}$ computational cost at iteration $t$ where $p_t$ 
never exceeds the number of covariates switching 
spike-and-slab states between iterations $t$ and $t-1$ 
of the Markov chain.
This improves upon the order  $n^2p$ per-iteration cost of state-of-the-art 
implementations as, typically, $p_t$ is substantially smaller than $p$.
We apply \sss on synthetic and real-world datasets, demonstrating orders of magnitude speed-ups over existing exact samplers and significant gains in inferential quality over approximate samplers with comparable cost.
\end{abstract}

\section{Introduction} \label{section:intro}

\subsection{Bayesian computation in high dimensions}
\label{subsection:bayes_comp}
We consider linear,  logistic, and probit regression in high dimensions, 
where the number of observations $n$ is 
smaller than the number of covariates $p$. 
This setting is common in modern applications such as 
genome-wide association studies \citep{guan2011bayesianAOAS, zhou2013polygenicPLOS}
and astronomy \citep{kelly2007someASTRO,sereno2015bayesianMONTHLY}.
In the non-Bayesian paradigm, sparse point estimates 
such as the LASSO \citep{tibshirani1994regression}, Elastic Net \citep{zou2005regularization} 
and SLOPE \citep{Bogdan2015slopeAOAS} 
offer a route to variable selection. These estimates are based on 
optimization based approaches, which are 
computationally efficient and scale to 
datasets with hundreds of thousands of covariates.  

In the Bayesian paradigm, which will be our focus, one places a prior 
on the unknown parameters of interest and considers the corresponding posterior distribution. 
Sampling algorithms such as Markov chain Monte Carlo (MCMC) are then 
used to simulate from the posterior distribution. In modern high dimensional settings, 
general-purpose MCMC algorithms can have high computational cost per iteration. 
This has kindled a line of work on tailored algorithms for Bayesian regression 
\citep[e.g.,][]{polson2013bayesianJASA, yang2016onAOS, narisetty2019skinnyJASA, johndrow2020scalableJMLR, biswas2022coupled}.
Our manuscript participates in this wider effort to scale Bayesian inference to large data
applications. Specifically, we propose computationally efficient MCMC algorithms for 
high-dimensional Bayesian linear, logistic and probit regression with spike-and-slab priors. 

\subsection{Variable selection with spike-and-slab priors}
\label{subsection:bayes_model_select}
Consider Gaussian linear regression, logistic regression, and probit regression 
with $n$ observations and $p$ covariates.  The respective likelihoods are given by
$L_{\textup{lin}}(\bm{\beta}, \sigma^2 ; \bm{X},\bm{y}) = \frac{1}{(2 \pi \sigma^2)^{n/2}} \exp \big(- \frac{\sum_{i=1}^n (y_i - \bm{x_i}^\top \bm{\beta})^2}{2\sigma^2}\big)$, 
$L_{\textup{log}}(\bm{\beta} ; \bm{X},\bm{y}) = \prod_{i=1}^n \frac{\exp(y_i \bm{x_i}^\top \bm{\beta})}{1+\exp(\bm{x_i}^\top \bm{\beta})}$, and 
$L_{\textup{prob}}(\bm{\beta} ; \bm{X},\bm{y}) = \prod_{i=1}^n \Phi( \bm{x_i}^\top \bm{\beta} )^{y_i} (1-\Phi( \bm{x_i}^\top \bm{\beta} ))^{1-y_i}$.
Here  
$\bm{X} \in \mathbb{R}^{n \times p}$ is the design matrix with rows $\bm{x_i}^\top$, 
$\bm{y} \in \mathbb{R}^n$ (for linear regression) or $\bm{y} \in \{0,1\}^n$ (for logistic and probit regression) 
is the response vector, $\bm{\beta} \in \mathbb{R}^p$ is the unknown signal, 
$\sigma^2 \in (0, \infty)$ is the unknown Gaussian noise variance, and 
$\Phi$ is the cumulative density function of ${\cal N}(0,1)$.

We focus on the high-dimensional
setting with $n \ll p$, where $\bm{\beta} \in \mathbb{R}^p$
is assumed to be sparse.  We use a continuous spike-and-slab 
prior on $\bm{\beta}$ to capture sparsity: 
\opt{arxiv}{
\begin{flalign}
& \sigma^{2} \sim \InvGamma \Big(\frac{a_0}{2}, \frac{b_0}{2} \Big), 
\hspace{16pt} \text{(for linear regression)} \nonumber \\
& \sigma^{2} = 1, \hspace{100pt} \text{(for logistic and probit regression)} \nonumber \\
& z_j \underset{j=1,\mydots,p}{\overset{i.i.d.}{\sim}} \Bernoulli(q), \nonumber \\
& \beta_j | z_j, \sigma^2 \underset{j=1,\mydots,p}{\overset{ ind }{\sim}} (1-z_j) \mathcal{N}(0, \sigma^2 \tau_0^2) + z_j \mathcal{N}(0, \sigma^2 \tau_1^2), \label{eq:cont_spike_slab}
\end{flalign}
}
\opt{icml}{
\begin{flalign}
& \sigma^{2} \sim \InvGamma \Big(\frac{a_0}{2}, \frac{b_0}{2} \Big), 
\hspace{31pt} \text{(for linear regression)} \nonumber \\
& \sigma^{2} = 1, \hspace{65pt} \text{(for logistic and probit regression)} \nonumber \\
& z_j \underset{j=1,\mydots,p}{\overset{i.i.d.}{\sim}} \Bernoulli(q), \nonumber \\
& \beta_j | z_j,  \! \sigma^2 \! \underset{j=1,\mydots,p}{\overset{ind}{\sim}} \! (1 \! - \! z_j)\mathcal{N}(0, \!  \sigma^2 \tau_0^2) \! + \! z_j\mathcal{N}(0,  \! \sigma^2 \tau_1^2), \label{eq:cont_spike_slab}
\end{flalign}
}
where 
$q\in (0,1)$, $\tau_1^2 \gg \tau_0^2 >0$, and $a_0, b_0 > 0$ are 
hyperparameters. Here, $\mathcal{N}(0, \sigma^2 \tau_0^2)$
and $\mathcal{N}(0, \sigma^2 \tau_1^2)$ correspond to the \textit{spike} and \textit{slab} parts
of the prior respectively. 
In the high-dimensional setting, 
a small constant $q \ll 1$ is often chosen, but the algorithms in this manuscript also
readily extend to hierarchical variants of \eqref{eq:cont_spike_slab} with a hyperprior placed on $q$ \citep{scott2010bayesAOS,castillo2012needlesAOS}. 

Catalyzed by the works of \citet{george1993variableJASA, george1997approachesSS}, 
continuous spike-and-slab priors are now a mainstay of Bayesian variable
selection (see the recent reviews of \citet[][Section I]{vannucci2021handbook}
and \citet{banerjee2021bayesian}).
The posterior probabilities $\mathbb{P}(z_j=1| \bm{y})$ provide a natural
interpretable approach to variable selection. The \textit{median probability model} selects all covariates $j$ such that $\mathbb{P}(z_j=1| \bm{y}) > \sfrac{1}{2}$, and it is easily fitted using Monte Carlo samples and provides the optimal predictive model in the case with orthogonal design matrix, as well as extensions with certain correlated matrices \citep{barbieri2004optimalAOS, barbieri2021medianBA}.
\citet{narisetty2014bayesianAOS} have further fine-tuned the 
optimal scaling of $q, \tau_0^2$, and $\tau_1^2$ with respect to the number of 
covariates and sample size to establish model selection consistency for 
linear regression with general design matrices in high dimensions. 

One could alternatively consider \textit{point-mass} spike-and-slab priors 
\citep[e.g.,][]{mitchell1988bayesianJASA, johnson2012bayesianJASA}, where 
$\beta_j | z_j \sim (1-z_j) \delta_0(\cdot) + z_j \mathcal{N}(0, \sigma^2 \tau_1^2)$ 
such that a degenerate Dirac distribution about zero is chosen for the spike part. 
Point-mass priors have favorable statistical properties
\citep[e.g.,][]{johnstone2004needlesAOS, castillo2012needlesAOS} 
and we hope to extend our algorithms to point-mass priors in follow-up work.

\subsection{Our contributions}
Our contributions are summarized below.
Throughout, we use $\mathcal{O}$ and $\Omega$ to respectively denote asymptotic upper and lower
bounds on computational complexity growth rates. 

Section \ref{section:mcmc} introduces 
\textit{Scalable Spike-and-Slab} (\sss), 
a computationally efficient implementation of Gibbs samplers 
for linear and logistic regression with the
prior given in \eqref{eq:cont_spike_slab}. 
Section \ref{subsection:bottlenecks} investigates the 
computational bottlenecks of state-of-the-art (SOTA) 
implementations, which require $\Omega(n^2p)$
computational cost per iteration for datasets with $n$ observations and $p$ covariates. 
Section \ref{subsection:S3} develops \sss, 
which overcomes existing computational bottlenecks by employing 
a pre-computation based strategy and requires 
$\mathcal{O}(\max\{n^2 p_t, np\})$ computational cost at iteration $t$, 
where $p_t$ is no greater than the number of covariates switching 
spike-and-slab states between iterations $t$ and $t-1$ 
of the Markov chain. 
Section \ref{subsection:complexity_analysis} analyzes 
the favourable computational complexity of \sss, showing 
that $p_t$ is typically much smaller than $p$ and that it can remain constant and even approach zero
under various limiting regimes as $p$ increases.

Section \ref{section:comparison} compares
\sss with the SOTA exact MCMC sampler and a recently 
proposed approximate MCMC sampler, which does not converge to the posterior distribution of interest.
We demonstrate that \sss offers substantially
faster numerical runtimes compared to the SOTA exact MCMC sampler, 
reporting $50\times$ speedups on synthetic datasets. 
In the same experiment, 
\sss and the approximate sampler have comparable runtimes, but the asymptotically exact \sss procedure provides more accurate variable selection.

Section \ref{section:applications} demonstrates 
the benefits of \sss
on a diverse suite of datasets, including two synthetic datasets and eight real-world experiments. 
For example, on a genome-wide association study (GWAS) dataset with 
with $n\!\approx\!2000$ and $p\!\approx\!100000$, we again observe 
$50\times$ computational speedups over the SOTA exact MCMC sampler. 
Finally, Section \ref{section:discussion} discusses directions for future work. 
The open-source packages $\href{https://CRAN.R-project.org/package=ScaleSpikeSlab}{\mathrm{ScaleSpikeSlab}}$ in 
$\mathrm{R}$ and $\mathrm{Python}$ 
(\url{www.github.com/niloyb/ScaleSpikeSlab})
implement our methods and recreate the experiments in this paper.

\section{Scalable Spike-and-Slab} \label{section:mcmc}

\subsection{Status quo and computational bottlenecks}
\label{subsection:bottlenecks}
Gibbs samplers have long been employed to sample from the 
posterior distributions corresponding to the prior in \eqref{eq:cont_spike_slab} 
\citep[e.g.,][]{george1993variableJASA, obrien2004bayesianBIOMETRICS, 
held2006bayesianBA, polson2013bayesianJASA}. The 
computational bottleneck of existing Gibbs samplers
is linked to  sampling from the full conditional of 
$\bm{\beta} \in \mathbb{R}^p$. This is given by 
\opt{arxiv}{
\begin{equation} \label{eq:full_conditional_beta}
    \bm{\beta}_{t+1} | \bm{z}_t, \sigma^2_t \sim 
    \mathcal{N} \big(\bm{\Sigma}^{-1}_t \bm{X}^\top \bm{y}, \sigma_t^2 \bm{\Sigma}^{-1}_t \big)\quad {\rm for}\quad  \bm{\Sigma}_t = \bm{X}^\top \bm{X} + \bm{D}_t,
\end{equation}
}
\opt{icml}{
\begin{equation} \label{eq:full_conditional_beta}
    \bm{\beta}_{t+1} | \bm{z}_t, \sigma^2_t \! \sim \!
    \mathcal{N} \big(\bm{\Sigma}^{-1}_t \bm{X}^\top \bm{y}, \sigma_t^2 \bm{\Sigma}^{-1}_t \big)
\end{equation}
}
\opt{icml}{for $\bm{\Sigma}_t \!=\! \bm{X}^\top \bm{X} + \bm{D}_t$},
where $t$ indexes the iteration of the Markov chain, 
and $\bm{D}_t$ is the diagonal matrix with the vector 
$\bm{z}_t \tau^{-2}_1 + (\bm{\mathrm{1}}_p-\bm{z}_t) \tau^{-2}_0$ 
populating its diagonal elements.

Sampling from \eqref{eq:full_conditional_beta} using standard matrix multiplication and a generic Cholesky 
decomposition that ignores the specific structure 
of $\bm{\Sigma}_t$ requires $\Omega(p^3)$ computational cost, 
which quickly becomes prohibitive for large $p$. 
Hereafter we will refer to this generic method as the Na\"ive Sampler. 
\citet{ishwaran2005spikeAOS} recommend separating the components 
of $\bm{\beta}$ into $B$ blocks of size $\sfrac{p}{B}$ each, and then updating each block using Gibbs
sampling, which gives a reduced $\Omega(p^3 B^{-2})$ computational cost. 
However, this cost remains prohibitive for large $p$, and using a 
larger number of blocks $B$ induces higher auto-correlation 
between successive iterations of the Gibbs sampler.
Recently, \citet{bhattacharya2016fastBIOMETRIKA} developed 
an algorithm based on the Woodbury matrix identity \citep{hager1989theSIAM}
to sample from multivariate Gaussian distributions of the 
form in \eqref{eq:full_conditional_beta}, 
which requires a more favourable $\Omega(n^2p)$ 
computational cost and is given in Algorithm \ref{algo:fast_mvn_bhattacharya}. 
\begin{algorithm}[tb]
\caption{An $\Omega(n^2p)$ sampler of \eqref{eq:full_conditional_beta}
\citep{bhattacharya2016fastBIOMETRIKA}}
   \label{algo:fast_mvn_bhattacharya}
\begin{algorithmic}
  \STATE Sample $\bm{r} \sim \mathcal{N}(0, \bm{I}_p)$, $\bm{\xi} \sim \mathcal{N}(0,\bm{I}_n)$.
  \STATE Set $\bm{u} = \bm{D}_t^{-\frac{1}{2}}\bm{r}$ and calculate $\bm{v} = \bm{X}\bm{u} + \bm{\xi}$.
  \STATE Set $\bm{v}^* = \bm{M}_t^{-1} (\frac{1}{\sigma_t} \bm{y} - \bm{v})$ for $\bm{M}_t = \bm{I}_n + \bm{X} \bm{D}_t^{-1} \bm{X}^\top$.
  \STATE {\bfseries Return} $\bm{\beta} = \sigma_t ( \bm{u} +  \bm{D}_t^{-1} \bm{X}^\top \bm{v}^* )$.
\end{algorithmic}
\end{algorithm}

For large-scale datasets with $n$ in the thousands and $p$ in the hundreds 
of thousands, as found in modern scientific applications, 
the $\Omega(n^2p)$ cost per iteration is still too high. 
This has spurred recent work on approximate 
MCMC \citep{narisetty2019skinnyJASA} for the continuous
spike-and-slab prior on logistic regression and on variational inference methods 
for the point-mass spike-and-slab prior 
\citep{titsias2011spikeNEURIPS, ray2020spikeNEURIPS, ray2021variationalJASA}.
Such approximate samplers can provide improved computational 
speeds but do not converge to the posterior distribution of interest.  

\subsection{A scalable Gibbs sampler} \label{subsection:S3}
We now develop \sss. Our key insight is that successive pre-computation can be used to reduce the computational 
cost of Algorithm \ref{algo:fast_mvn_bhattacharya}. In Algorithm \ref{algo:fast_mvn_bhattacharya}, 
the $\Omega(n^2p)$ computational cost per iteration 
arises from the calculation of the matrix $\bm{M}_t$. 
Current algorithms calculate $\bm{M}_t = \bm{I}_n + \bm{X} \bm{D}_t^{-1} \bm{X}^\top$ from scratch 
every iteration at $\Omega(n^2p)$ cost under standard matrix multiplication and then 
solve an $n$ by $n$ linear system to obtain $v^*$
at $\Omega(n^3)$ cost. 
We propose instead to use the previous state $\bm{z}_{t-1}$ and the pre-computed matrices $\bm{M}_{t-1}$ and $\bm{M}_{t-1}^{-1}$
at each step to aid the calculation of $\bm{M}_t^{-1}$. Our strategy is given below. 

Specifically, denote $\bm{X} = [\bm{X}_1,\mydots,\bm{X}_p] \in \mathbb{R}^{n \times p}$ and its sub-matrices $\bm{X}_{A_t} \defeq [\bm{X}_j : j \in A_t] \in \mathbb{R}^{n \times \|\bm{z}_t\|}$
and $\bm{X}_{A^c_t} \defeq [\bm{X}_j : j \in A^c_t] \in \mathbb{R}^{n \times (p - \|\bm{z}_t\|)}$, where  $A_t \defeq \{ j: z_{j,t} = 1 \}$ and $A_t^c = \{ j: z_{j,t} = 0 \}$,  are the ordered index sets of covariates corresponding to slab states and to spike states respectively at iteration $t$, and $\|\cdot\|_1$ is the $\lp[1]$ norm on $\{0,1\}^p$. Also denote $\Delta_t \defeq \{ j: z_{j,t} \neq z_{j,t-1} \}$, the ordered index set of
covariates which switch spike-and-slab states between iterations $t$ and $t-1$;  $\delta_t \defeq \|\bm{z}_t-\bm{z}_{t-1}\|_1 = |\Delta_t|$, the number 
of switches; 
$\bm{D}_{\Delta_t} \defeq \Diag((\bm{D}_{t})_{j,j} : j \in \Delta_t)
\in \Diag(\mathbb{R}^{\delta_t \times \delta_t})$, 
the diagonal sub-matrix of $\bm{D}_t$ composed of the diagonal entries with 
ordered indices in $\Delta_t$; and $ \bm{C}_{\Delta_t} \defeq \bm{D}_{\Delta_t}^{-1} - \bm{D}_{\Delta_{t-1}}^{-1}$. Finally, let $\bm{\tilde{M}}_{\tau_0} \defeq \bm{I}_n + \tau_0^2 \bm{X} \bm{X}^T \in \mathbb{R}^{n \times n}$ and 
$\bm{\tilde{M}}_{\tau_1} \defeq \bm{I}_n + \tau_1^2 \bm{X} \bm{X}^T \in \mathbb{R}^{n \times n}$, which 
are fixed for all iterations. 
\begin{subequations}
Under this notation, there are three expressions for $\bm{M}_t$:
\opt{arxiv}{
\begin{flalign}
\bm{M}_t &= \bm{\tilde{M}}_{\tau_0} + (\tau_1^2-\tau_0^2) \bm{X}_{A_t} \bm{X}_{A_t}^\top \label{eq:pre_compute_correction1a} \\
&= \bm{\tilde{M}}_{\tau_1} + (\tau_0^2-\tau_1^2) \bm{X}_{A^c_t} \bm{X}_{A^c_t}^\top \label{eq:pre_compute_correction1b} \\
&= \bm{M}_{t-1} + \bm{X}_{\Delta_t}  \bm{C}_{\Delta_t} \bm{X}_{\Delta_t}^\top \label{eq:pre_compute_correction1c}.
\end{flalign}
}
\opt{icml}{
\small
\begin{flalign}
\bm{M}_t &= \bm{\tilde{M}}_{\tau_0} + (\tau_1^2-\tau_0^2) \bm{X}_{A_t} \bm{X}_{A_t}^\top \label{eq:pre_compute_correction1a} \\
&= \bm{\tilde{M}}_{\tau_1} + (\tau_0^2-\tau_1^2) \bm{X}_{A^c_t} \bm{X}_{A^c_t}^\top \label{eq:pre_compute_correction1b} \\
&= \bm{M}_{t-1} + \bm{X}_{\Delta_t}  \bm{C}_{\Delta_t} \bm{X}_{\Delta_t}^\top \label{eq:pre_compute_correction1c}.
\end{flalign}
\normalsize
}
In \eqref{eq:pre_compute_correction1a}~--~\eqref{eq:pre_compute_correction1c}, 
calculating the matrix products 
$\bm{X}_{A_t} \bm{X}_{A_t}^\top$, $\bm{X}_{A^c_t} \bm{X}_{A^c_t}^\top$, and $\bm{X}_{\Delta_t}  \bm{C}_{\Delta_t} \bm{X}_{\Delta_t}^\top$ requires 
$\mathcal{O}(n^2 \|\bm{z}_t\|_1)$, $\mathcal{O}(n^2 (p-\|\bm{z}_t\|_1))$, and
$\mathcal{O}(n^2 \delta_t)$ cost respectively. Given 
$\bm{\tilde{M}}_{\tau_0}$, $\bm{\tilde{M}}_{\tau_1}$, $\bm{M}_{t-1}$, and $\bm{z}_{t-1}$, 
we evaluate whichever matrix product in 
\eqref{eq:pre_compute_correction1a}~--~\eqref{eq:pre_compute_correction1c} 
has minimal computational cost and thereby calculate $\bm{M}_t$ at the reduced cost of 
$\mathcal{O}(n^2 p_t)$ where $p_t \defeq \min \{ \|\bm{z}_t\|_1, p-\|\bm{z}_t\|_1, \delta_t \}$.
\label{eq:pre_compute_correction1}
\end{subequations}

\begin{subequations}
To calculate $\bm{M}_t^{-1}$, we consider the cases $n \leq p_t$ and $p_t < n$ separately.
When $n \leq p_t$, we calculate $\bm{M}_t^{-1}$ by directly inverting the 
calculated matrix $\bm{M}_t$ from \eqref{eq:pre_compute_correction1}, 
which requires $\mathcal{O}(n^3)$ cost.
When $p_t < n$, we apply the Woodbury matrix identity on 
\eqref{eq:pre_compute_correction1}. This gives
\opt{arxiv}{
\begin{flalign}
\bm{M}_t^{-1} &= \bm{\tilde{M}}_{\tau_0}^{-1} -\bm{\tilde{M}}_{\tau_0}^{-1} \bm{X}_{A_t} \big( (\tau_1^2-\tau_0^2)^{-1} \bm{I}_{\|\bm{z}_t\|_1} + 
\bm{X}_{A_t}^\top \bm{\tilde{M}}_{\tau_0}^{-1} \bm{X}_{A_t} \big)^{-1} \bm{X}_{A_t}^\top \bm{\tilde{M}}_{\tau_0}^{-1} 
\label{eq:pre_compute_correction2a} \\
&= \bm{\tilde{M}}_{\tau_1}^{-1} -\bm{\tilde{M}}_{\tau_1}^{-1} \bm{X}_{A_t^c} \big( (\tau_0^2-\tau_1^2)^{-1} \bm{I}_{p-\|\bm{z}_t\|_1} + 
\bm{X}_{A^c_t}^\top \bm{\tilde{M}}_{\tau_1}^{-1} \bm{X}_{A^c_t} \big)^{-1} \bm{X}_{A^c_t}^\top \bm{\tilde{M}}_{\tau_1}^{-1} \label{eq:pre_compute_correction2b} \\
&= \bm{M}_{t-1}^{-1} - \bm{M}_{t-1}^{-1}\bm{X}_{\Delta_t} \big(  \bm{C}_{\Delta_t}^{-1} + 
\bm{X}_{\Delta_t}^\top \bm{M}_{t-1}^{-1} \bm{X}_{\Delta_t} \big)^{-1} \bm{X}_{\Delta_t}^\top \bm{M}_{t-1}^{-1}. \label{eq:pre_compute_correction2c}
\end{flalign}
}
\opt{icml}{
\small
\begin{flalign}
 \! \bm{M}_t^{\minus1} & \!= \bm{\tilde{M}}_{\tau_0}^{\minus1} \label{eq:pre_compute_correction2a} \\
&\! -\bm{\tilde{M}}_{\tau_0}^{\minus1} \! \bm{X}_{A_t} \! \big( \! \frac{1}{\tau_1^2\!-\!\tau_0^2} \! \bm{I}_{\|\bm{z}_t\|_1} \! +
\! \bm{X}_{A_t}^\top \! \bm{\tilde{M}}_{\tau_0}^{\minus1} \! \bm{X}_{A_t} \! \big)^{\minus1} \! \bm{X}_{A_t}^\top \! \bm{\tilde{M}}_{\tau_0}^{\minus1} \nonumber
\\
&= \bm{\tilde{M}}_{\tau_1}^{\minus1} \label{eq:pre_compute_correction2b} \\
&\!-\bm{\tilde{M}}_{\tau_1}^{\minus1} \! \bm{X}_{A_t^c} \! \big( \! \frac{1}{\tau_0^2\!-\!\tau_1^2} \! \bm{I}_{p-\|\bm{z}_t\|_1} \! +  \! \bm{X}_{A^c_t}^\top \! \bm{\tilde{M}}_{\tau_1}^{\minus1} \! \bm{X}_{A^c_t} \! \big)^{\minus1} \! \bm{X}_{A^c_t}^\top  \! \bm{\tilde{M}}_{\tau_1}^{\minus1} \nonumber \\
&= \bm{M}_{t\minus1}^{\minus1} \label{eq:pre_compute_correction2c} \\
&-\bm{M}_{t\minus1}^{\minus1}\bm{X}_{\Delta_t} \big(  \bm{C}_{\Delta_t}^{\minus1} + 
\bm{X}_{\Delta_t}^\top \bm{M}_{t\minus1}^{\minus1} \bm{X}_{\Delta_t} \big)^{\minus1} \bm{X}_{\Delta_t}^\top \bm{M}_{t\minus1}^{\minus1}. \nonumber
\end{flalign}
\normalsize
}
Given $\bm{\tilde{M}}_{\tau_0}^{-1}$, $\bm{\tilde{M}}_{\tau_1}^{-1}$, $\bm{M}_{t-1}^{-1}$ and $\bm{z}_{t-1}$, 
we evaluate whichever expression in \eqref{eq:pre_compute_correction2a}~--~\eqref{eq:pre_compute_correction2c}
has minimal computational cost to calculate $\bm{M}_t^{-1}$. 
Similar to \eqref{eq:pre_compute_correction1}, this requires 
$\mathcal{O}(n^2 p_t)$ computational cost,
which arises from matrix inversion and multiplication. 
\label{eq:pre_compute_correction2}
\end{subequations}

Overall, this strategy of using the previous state $\bm{z}_{t-1}$ and the 
pre-computed matrices $\bm{\tilde{M}}_{\tau_0}$, $\bm{\tilde{M}}_{\tau_0}^{-1}$, 
$\bm{\tilde{M}}_{\tau_1}$, $\bm{\tilde{M}}_{\tau_1}^{-1}$, $\bm{M}_{t-1}$ and $\bm{M}_{t-1}^{-1}$, 
reduces the computational cost of calculating the matrices $\bm{M}_{t}$ and $\bm{M}_{t}^{-1}$ from
$\Omega(n^2p)$ (as in all current implementations of Algorithm
\ref{algo:fast_mvn_bhattacharya}) to $\mathcal{O}(n^2 p_t)$.
As we show in Sections \ref{subsection:complexity_analysis} and 
\ref{section:comparison}, in many large-scale
applications $p_t$ is orders of magnitude smaller 
than both $n$ and $p$, yielding substantial improvements in 
computational efficiency. 
Furthermore, we emphasize that the matrices $\bm{\tilde{M}}_{\tau_0}$, 
$\bm{\tilde{M}}_{\tau_0}^{-1}$, $\bm{\tilde{M}}_{\tau_1}$, $\bm{\tilde{M}}_{\tau_1}^{-1}$ 
are fixed for all iterations, and the state $\bm{z}_{t-1}$ and matrices 
$\bm{M}_{t-1}$ and $\bm{M}_{t-1}^{-1}$ only need to be stored temporarily to 
generate samples for iteration $t$ and can be deleted after. 
Therefore \sss requires minimal additional memory compared 
to current implementations. 

The full Gibbs samplers for Bayesian linear, logistic, 
and probit regression which make use of this pre-computation are given in Algorithms \ref{algo:linear_spike_slab} and \ref{algo:logistic_probit_spike_slab}. 
The Gibbs samplers for logistic and probit regression are based on data augmentation strategies (see, e.g., \citet{obrien2004bayesianBIOMETRICS, narisetty2019skinnyJASA}), and the Gibbs sampler for logistic regression 
requires an adjusted pre-computation strategy with $\mathcal{O} \big(\max\{ n^2 p_t, n^3, np \} \big)$ cost.
Appendix \ref{appendices:algo_derivations} contains derivations and details of Algorithms \ref{algo:linear_spike_slab} and \ref{algo:logistic_probit_spike_slab} (the implementation of logistic regression is based on a scaled $t$-distribution 
approximation to the logistic distribution, as commonly done in the literature; see \cite{narisetty2019skinnyJASA}). 

\begin{algorithm}
\caption{Bayesian linear regression with \sss}
\label{algo:linear_spike_slab}
{\bfseries Input:} State $ \bm{C}_t \defeq (\bm{\beta}_t, \bm{z}_t, \sigma_t^2) \in 
\mathbb{R}^p \times \{0,1\}^p \times (0,\infty)$,
state $\bm{z}_{t-1}$, and matrices $\bm{M}_{t-1}, \bm{M}_{t-1}^{-1}$.
\vskip 0.1in
\begin{algorithmic}[1]
  \STATE Calculate $p_t$ and use $\eqref{eq:pre_compute_correction1}$ to calculate $\bm{M}_t$. 

  \algorithmicif{ $p_t \geq n$ } \algorithmicthen { invert $\bm{M}_t$ to calculate $\bm{M}_t^{-1}$ }
  \algorithmicelse { use \eqref{eq:pre_compute_correction2} to calculate $\bm{M}_t^{-1}$.}
  
 \STATE Sample $\bm{\beta}_{t+1} | \bm{z}_t, \sigma_t^2$ using Algorithm \ref{algo:fast_mvn_bhattacharya} from \\
 \begin{center}
     $\mathcal{N} \big( \bm{\Sigma}^{-1}_t \bm{X}^\top \bm{y},  \sigma_t^2 \bm{\Sigma}^{-1}_t \big) \text{ for } \bm{\Sigma}_t = \bm{X}^\top \bm{X} + \bm{D}_t.$
 \end{center}
  \STATE Sample each $z_{j,t+1} | \bm{\beta}_{t+1}, \sigma_t^2$ independently from
  \begin{center}
      $\Bernoulli \Big(\frac{q \mathcal{N}(\beta_{j,t+1}; 0, \sigma_t^2 \tau_1^2)}{q \mathcal{N}(\beta_{j,t+1}; 0, \sigma_t^2 \tau_1^2) + (1-q) \mathcal{N}(\beta_{j,t+1}; 0, \sigma_t^2 \tau_0^2)} \Big).$
  \end{center}
  for $j=1,\mydots,p$.
  \STATE Sample $\sigma^{2}_{t+1} | \bm{\beta}_{t+1}, \bm{z}_{t+1}$ from
  \begin{center}
      $\InvGamma \Big( \frac{a_0+n+p}{2}, \frac{b_0 + \| \bm{y} - \bm{X} \bm{\beta}_{t+1} \|_2^2 + \bm{\beta}_{t+1}^\top \bm{D}_{t+1} \bm{\beta}_{t+1} }{2} \Big).$
  \end{center}
\end{algorithmic}
\vskip 0.1in
{\bfseries Output:} $ \bm{C}_{t+1} = (\bm{\beta}_{t+1}, \bm{z}_{t+1}, \sigma^2_{t+1})$, 
$\bm{z}_t$, $\bm{M}_{t}, \bm{M}_{t}^{-1}$.
\end{algorithm}

\begin{algorithm}
\caption{Bayesian logistic \& probit regression with \sss}
\label{algo:logistic_probit_spike_slab}
{\bfseries Input:} 
State $ \bm{C}_t \defeq (\bm{\beta}_t, \bm{z}_t, \bm{\tilde{y}}_t, \bm{\tilde{\sigma}}^2_t) \in \mathbb{R}^p \times \{0,1\}^p \times \mathbb{R}^n \times (0,\infty)^n$,
states $\bm{z}_{t-1}$, $\bm{\tilde{\sigma}}^2_{t-1}$, and matrices $\bm{M}_{t-1}, \bm{M}_{t-1}^{-1}$. 
\vskip 0.1in
\begin{algorithmic}[1]
  \STATE \textit{Logistic regression:}
  Use pre-computation (see Appendix \ref{appendices:logistic}) to calculate $\bm{M}_t \defeq \bm{I}_n + \bm{W}_t^{-1/2} \bm{X} \bm{D}_t^{-1} \bm{X}^\top \bm{W}_t^{-1/2}$ for 
  $\bm{W}_t = \Diag(\bm{\tilde{\sigma}}^2_t)$. Invert $\bm{M}_t$ to calculate $\bm{M}_t^{-1}$.
  
  \textit{Probit regression:} Calculate $p_t$ and use $\eqref{eq:pre_compute_correction1}$ to calculate $\bm{M}_t \defeq \bm{I}_n + \bm{X} \bm{D}_t^{-1} \bm{X}^\top$. 
  \algorithmicif{ $p_t \geq n$ } \algorithmicthen { invert $\bm{M}_t$ to calculate $\bm{M}_t^{-1}$ }
  \algorithmicelse { use \eqref{eq:pre_compute_correction2} to calculate $\bm{M}_t^{-1}$.}
  
 \STATE Sample $\bm{\beta}_{t+1} | \bm{z}_t, \bm{\tilde{y}}_t, \bm{\tilde{\sigma}}_t^2$ using Algorithm \ref{algo:fast_mvn_bhattacharya} from \\
 \begin{center}
     $\mathcal{N} \big( \bm{\Sigma}^{-1}_t \bm{X}^\top \bm{W}^{-1}_t \bm{\tilde{y}}_t, \bm{\Sigma}^{-1}_t \big) \text{ for } \bm{\Sigma}_t\!=\!\bm{X}^\top \! \bm{W}_t^{-1} \! \bm{X}\!+\!\bm{D}_t.$
 \end{center}
  \STATE Sample each $z_{j,t+1} | \bm{\beta}_{t+1}, \bm{\tilde{y}}_t, \bm{\tilde{\sigma}}_t^2$ independently from
  \begin{center}
      $\Bernoulli \Big(\frac{q \mathcal{N}(\beta_{j,t+1}; 0, \tau_1^2)}{q \mathcal{N}(\beta_{j,t+1}; 0,  \tau_1^2) + (1-q) \mathcal{N}(\beta_{j,t+1}; 0, \tau_0^2)} \Big)$
  \end{center}
  for $j=1,\mydots,p$.
  \STATE Sample each $\tilde{y}_{i,t+1} | \bm{\beta}_{t+1}, \bm{z}_{t+1}, \bm{\tilde{\sigma}}_t^2$ independently from
  \begin{center}
$\mathcal{N}( \bm{x_i}^\top  \bm{\beta}_{t+1}, \tilde{\sigma}_{i,t}^2) \mathrm{I}_{[0, +\infty)} \quad \text{if } y_i=1,$ $\quad 
\mathcal{N}( \bm{x_i}^\top  \bm{\beta}_{t+1}, \tilde{\sigma}_{i,t}^2) \mathrm{I}_{(-\infty, 0)} \quad \text{if } y_i=0$
  \end{center}
  for $i=1,\mydots,n$.
  \STATE 
  \textit{Logistic regression:} Sample each $\tilde{\sigma}_{i,t+1}^2 | \bm{\beta}_{t+1}, \bm{z}_{t+1}, \bm{\tilde{y}}_{t+1}$ independently from
  \begin{center}
      $\InvGamma \Big( \frac{v+1}{2}, \frac{w^2 \nu + 
      (\tilde{y}_{i,t+1} - \bm{x_i}^\top \bm{\beta}_{t+1} )^2}{2} \Big)$
  \end{center}
  for $i=1,\mydots,n$, where $\nu \defeq 7.3$ and $w^2 \defeq \sfrac{\pi^2(\nu-2)}{(3 \nu)}$ are constants.
  
  \textit{Probit regression:} Set each $\tilde{\sigma}_{i,t+1}^2=1$ for $i=1,\mydots,n$.
\end{algorithmic}
\vskip 0.1in
{\bfseries Output:} $\bm{C}_{t+1} \!=\! (\bm{\beta}_{t+1}, \bm{z}_{t+1}, \bm{\tilde{y}}_{t+1}, \bm{\tilde{\sigma}}^2_{t+1})$, $\bm{z}_t$, $\bm{M}_{t}$, $\bm{M}_{t-1}^{-1}$
\end{algorithm}

\newpage

\subsection{Analysis of computational complexity}
\label{subsection:complexity_analysis}

We now investigate the favorable computational 
complexity of Algorithms \ref{algo:linear_spike_slab} and 
\ref{algo:logistic_probit_spike_slab}. Proposition \ref{prop:comp_cost},
proved in Appendix \ref{appendices:proofs}, 
gives the computational cost of these Gibbs samplers for linear and 
logistic regression, showing an improvement over 
existing implementations which have $\Omega(n^2p)$ cost. 

\begin{proposition}[Computational cost]
\label{prop:comp_cost}
Algorithm \ref{algo:linear_spike_slab} 
and Algorithm \ref{algo:logistic_probit_spike_slab} for 
probit regression both have a computational cost of 
$\mathcal{O} \big(\max\{ n^2 p_t, np \} \big)$ at iteration $t$, and 
Algorithm \ref{algo:logistic_probit_spike_slab} for  
logistic regression has a computational cost of 
$\mathcal{O} \big(\max\{ n^2 p_t, n^3, np \} \big)$
at iteration $t$, where 
$p_t = \min \{ \|\bm{z}_t\|_1, p-\|\bm{z}_t\|_1, \delta_t \}$ for 
$\delta_t=\|\bm{z}_t-\bm{z}_{t-1}\|_1$. 
\end{proposition}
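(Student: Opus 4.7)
The proof is a careful bookkeeping exercise that tallies the per-step cost of each Gibbs update in Algorithms \ref{algo:linear_spike_slab} and \ref{algo:logistic_probit_spike_slab}, relying on the pre-computation arguments already developed in Section \ref{subsection:S3}. The plan is to treat the $\bm{M}_t, \bm{M}_t^{-1}$ update as the potentially expensive step and show that every other step is $\mathcal{O}(np)$ (plus $\mathcal{O}(n^3)$ only in the logistic case), so that the maximum dominates.

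I would begin with Step 1 for the linear and probit samplers. The three equivalent representations \eqref{eq:pre_compute_correction1a}--\eqref{eq:pre_compute_correction1c} already show that, given the fixed matrices $\bm{\tilde{M}}_{\tau_0}, \bm{\tilde{M}}_{\tau_1}$ and the stored $\bm{M}_{t-1}$, one can form $\bm{M}_t$ at cost $\mathcal{O}(n^2 p_t)$ by choosing the cheapest representation. If $p_t \geq n$, direct inversion of $\bm{M}_t$ costs $\mathcal{O}(n^3) \leq \mathcal{O}(n^2 p_t)$. If $p_t < n$, the Woodbury forms \eqref{eq:pre_compute_correction2a}--\eqref{eq:pre_compute_correction2c} reduce the inversion to a $p_t \times p_t$ system costing $\mathcal{O}(p_t^3)$, plus matrix multiplications between $n \times n$, $n \times p_t$, and $p_t \times p_t$ matrices, all bounded by $\mathcal{O}(n^2 p_t)$. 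In either case Step 1 contributes $\mathcal{O}(n^2 p_t)$.

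Next I would account for the remaining steps. In Algorithm \ref{algo:fast_mvn_bhattacharya}, sampling $\bm{\beta}_{t+1}$ given $\bm{M}_t^{-1}$ involves only the matrix--vector products $\bm{X}\bm{u}$ and $\bm{X}^\top \bm{v}^\ast$ at $\mathcal{O}(np)$, one $\bm{M}_t^{-1}(\cdot)$ product at $\mathcal{O}(n^2)$, and diagonal scalings at $\mathcal{O}(p)$, so this step is $\mathcal{O}(np)$. The Bernoulli updates for $\bm{z}_{t+1}$ are $\mathcal{O}(p)$. For linear regression, sampling $\sigma_{t+1}^2$ requires forming $\bm{X}\bm{\beta}_{t+1}$ (cost $\mathcal{O}(np)$) and $\bm{\beta}_{t+1}^\top \bm{D}_{t+1} \bm{\beta}_{t+1}$ (cost $\mathcal{O}(p)$). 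For probit regression, sampling $\tilde y_i$ needs all $\bm{x}_i^\top \bm{\beta}_{t+1}$ (cost $\mathcal{O}(np)$) and then $\mathcal{O}(n)$ truncated-normal draws. Summing, the linear and probit algorithms have total cost $\mathcal{O}(\max\{n^2 p_t, np\})$.

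The main subtlety, and the part I expect to require the most care, is logistic regression. Because $\bm{W}_t = \Diag(\bm{\tilde\sigma}_t^2)$ changes every iteration, the fixed matrices $\bm{\tilde{M}}_{\tau_0}, \bm{\tilde{M}}_{\tau_1}$ are no longer immediately useful, and Woodbury reduction is complicated by the new outer scaling. The adjusted pre-computation strategy (detailed in Appendix \ref{appendices:logistic}) is to maintain $\bm{A}_t = \bm{X}\bm{D}_t^{-1}\bm{X}^\top$, which can be updated from $\bm{A}_{t-1}$ through the same three-way choice as in \eqref{eq:pre_compute_correction1} at cost $\mathcal{O}(n^2 p_t)$, then form $\bm{M}_t = \bm{I}_n + \bm{W}_t^{-1/2}\bm{A}_t \bm{W}_t^{-1/2}$ by two $\mathcal{O}(n^2)$ diagonal scalings, and invert $\bm{M}_t$ directly at $\mathcal{O}(n^3)$. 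I would verify that the extra $\tilde\sigma_{i,t+1}^2$ InvGamma draws and the $\bm{W}_t^{-1}$ multiplications add only $\mathcal{O}(np)$. Combining gives $\mathcal{O}(\max\{n^2 p_t, n^3, np\})$, completing the proof.
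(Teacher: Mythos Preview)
Your proposal is correct and follows essentially the same bookkeeping argument as the paper's proof: bound Step~1 by $\mathcal{O}(n^2 p_t)$ (respectively $\mathcal{O}(\max\{n^2 p_t, n^3\})$ for logistic) via the pre-computation identities, and show every remaining Gibbs step is $\mathcal{O}(np)$ or smaller. Your treatment is in fact slightly more detailed than the paper's---for instance, you explicitly account for the $\mathcal{O}(np)$ cost of forming $\bm{X}\bm{\beta}_{t+1}$ in the $\sigma^2$ and truncated-normal updates, and your description of the logistic update as maintaining $\bm{A}_t = \bm{X}\bm{D}_t^{-1}\bm{X}^\top$ is an equivalent rephrasing of the paper's formulas~\eqref{eq:pre_compute_correction3a}--\eqref{eq:pre_compute_correction3c}.
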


In Proposition \ref{prop:comp_cost}, $\|\bm{z}_t\|_1$ and $p-\|\bm{z}_t\|_1$ are the number of 
slab covariates and the number of spike covariates respectively at iteration $t$ of the Markov chain, 
and $\delta_t$ is the number of covariates switching spike-and-slab states between iterations 
$t$ and $t-1$ of the Markov chain. Note that $p_t \leq \min \{ \|\bm{z}_t\|_1, p-\|\bm{z}_t\|_1 \} \leq p/2$ directly. In practice, there are a variety of scenarios under which $p_t$ is significantly smaller than $p/2$.

\paragraph{Sparse $\bm{z}_t$.} Whenever $\bm{z}_t$ is sparse relative to the full dimensionality $p$, 
we have $p_t \leq \|\bm{z}_t\|_1 \ll p$. Sparsity in $\bm{z}_t$ is a common occurrence 
in high-dimensional regression with a sparse signal vector $\bm{\beta}^* \in \mathbb{R}^p$, as 
$\|\bm{z}_t\|_1$ often closely approximates the true sparsity $s \defeq \|\bm{\beta}^*\|_0$ 
with high probability. This occurs, for instance, in the settings of \citet{narisetty2014bayesianAOS} 
and \citet{narisetty2019skinnyJASA}, where strong model selection consistency of the continuous 
spike-and-slab posterior is established for linear and logistic regression respectively.

\paragraph{Posterior concentration.} 
Even when $\min \{ \|\bm{z}_t\|_1, p-\|\bm{z}_t\|_1 \}$ is comparable to $p/2$,  
concentration of the spike-and-slab posterior targeted by the Gibbs sampler can lead to $\delta_t$ and hence $p_t$ remaining much smaller than $p/2$. Proposition \ref{prop:expected_comp_cost}, proved in Appendix \ref{appendices:proofs}, 
calculates the expectation of $\delta_t$ explicitly in terms of 
the posterior distribution that is targeted by our algorithms and 
the auto-correlation of the states $(\bm{z}_t)_{t \geq 0}$. 

\begin{proposition}[Expected spike-and-slab swap count]
\label{prop:expected_comp_cost}
For $\delta_t$ as given in Proposition \ref{prop:comp_cost}, 
\opt{icml}{
\begin{flalign}
\mathbb{E}[ \delta_t ] & = \sum_{j=1}^p \mathbb{P}(z_{j,t}=1)\mathbb{P}(z_{j,t-1}=0) + \nonumber \\
& \mathbb{P}(z_{j,t}=0)\mathbb{P}(z_{j,t-1}=1) -2\mathrm{cov}(z_{j,t},z_{j,t-1}). \label{eq:expected_comp_cost1}
\end{flalign}
}
\opt{arxiv}{
\begin{flalign}
\mathbb{E}[ \delta_t ] = \sum_{j=1}^p \mathbb{P}(z_{j,t}=1)\mathbb{P}(z_{j,t-1}=0) + \mathbb{P}(z_{j,t}=0)\mathbb{P}(z_{j,t-1}=1) -2\mathrm{cov}(z_{j,t},z_{j,t-1}). \label{eq:expected_comp_cost1}
\end{flalign}
}

Suppose the Markov chain 
generated by Algorithm \ref{algo:linear_spike_slab} or \ref{algo:logistic_probit_spike_slab} 
is at its stationary distribution $\pi$ at iteration $t-1$. Then,
\begin{equation} \label{eq:expected_comp_cost2}
\mathbb{E}[ \delta_t ] = 2 \sum_{j=1}^p \mathrm{var}_{\pi}(z_{j,t}) (1-\mathrm{corr}_{\pi}(z_{j,t},z_{j,t-1})).    
\end{equation}
\end{proposition}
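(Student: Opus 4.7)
The plan is to reduce $\delta_t = \|\bm{z}_t - \bm{z}_{t-1}\|_1 = \sum_{j=1}^p |z_{j,t} - z_{j,t-1}|$ to expectations, exploiting the key observation that for Bernoulli-valued coordinates, $|z_{j,t} - z_{j,t-1}| = (z_{j,t} - z_{j,t-1})^2 = z_{j,t} + z_{j,t-1} - 2 z_{j,t} z_{j,t-1}$. This linearization turns the $L^1$ distance into a sum that interacts cleanly with first and second moments and, hence, with covariances.

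For \eqref{eq:expected_comp_cost1}, I would first take expectations term-by-term, obtaining
\begin{equation*}
\mathbb{E}[\delta_t] \;=\; \sum_{j=1}^p \Bigl( \mathbb{P}(z_{j,t}=1) + \mathbb{P}(z_{j,t-1}=1) - 2\,\mathbb{E}[z_{j,t} z_{j,t-1}] \Bigr).
\end{equation*}
Then, letting $p_{j,t} \defeq \mathbb{P}(z_{j,t}=1)$ and $p_{j,t-1} \defeq \mathbb{P}(z_{j,t-1}=1)$, I would rewrite $p_{j,t} + p_{j,t-1} = p_{j,t}(1-p_{j,t-1}) + p_{j,t-1}(1-p_{j,t}) + 2 p_{j,t} p_{j,t-1}$, which gives
\begin{equation*}
\mathbb{E}[\delta_t] \;=\; \sum_{j=1}^p \Bigl( \mathbb{P}(z_{j,t}{=}1)\mathbb{P}(z_{j,t-1}{=}0) + \mathbb{P}(z_{j,t}{=}0)\mathbb{P}(z_{j,t-1}{=}1) - 2(\mathbb{E}[z_{j,t} z_{j,t-1}] - p_{j,t} p_{j,t-1}) \Bigr),
\end{equation*}
and the last bracketed term is precisely $2\mathrm{cov}(z_{j,t}, z_{j,t-1})$.

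For \eqref{eq:expected_comp_cost2}, assuming the chain is stationary at time $t-1$ and hence at time $t$, the marginals of $z_{j,t}$ and $z_{j,t-1}$ coincide under $\pi$. Writing $p_{j,\pi} \defeq \mathbb{P}_\pi(z_j=1)$ and using that $z_j$ is Bernoulli, the first two terms of \eqref{eq:expected_comp_cost1} collapse to $2 p_{j,\pi}(1 - p_{j,\pi}) = 2\,\mathrm{var}_\pi(z_{j,t})$. The covariance term becomes $\mathrm{cov}_\pi(z_{j,t}, z_{j,t-1}) = \mathrm{corr}_\pi(z_{j,t}, z_{j,t-1})\,\mathrm{var}_\pi(z_{j,t})$ because the two marginal variances are equal, and combining yields \eqref{eq:expected_comp_cost2}.

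There is no substantial obstacle here; the proof is essentially algebraic once the $|{\cdot}| = ({\cdot})^2$ identity for $\{0,1\}$-valued variables is noted. The only minor care required is in the stationarity step, where one must explicitly invoke equality of marginals at times $t$ and $t-1$ in order to pass from variances to a single $\mathrm{var}_\pi(z_{j,t})$ factor and to identify the correlation.
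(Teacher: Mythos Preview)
Your proposal is correct and follows essentially the same approach as the paper. The only cosmetic difference is that the paper writes $\mathbb{E}[\delta_t]=\sum_j \mathbb{P}(z_{j,t}\neq z_{j,t-1})$ and decomposes each summand via $\mathbb{P}(z_{j,t}=1,z_{j,t-1}=0)=\mathbb{P}(z_{j,t}=1)-\mathbb{P}(z_{j,t}=1,z_{j,t-1}=1)$ before introducing the covariance, whereas you start from the algebraic identity $|a-b|=a+b-2ab$ for $a,b\in\{0,1\}$ (which the paper in fact uses in its proof of the next proposition); the resulting computations and the stationarity step coincide.
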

In \eqref{eq:expected_comp_cost2}, note
$\mathrm{var}_{\pi}(z_{j,t}) = \mathbb{P}_{\pi}(z_{j,t}=0) \mathbb{P}_{\pi}(z_{j,t}=1) \leq \sfrac{1}{4}$ 
for each component $j$, with equality only when 
$z_{j,t} \sim \Bernoulli(\sfrac{1}{2})$. 
Therefore all components $j$ with $\mathbb{P}_{\pi}(z_{j,t}=1)$ close to $0$ or $1$ 
do not contribute significantly
towards $\delta_t$ in expectation. 
Such posterior concentration is guaranteed whenever $\bm{z}_t$ is convergent, be it to the true model selection vector as in \citet{narisetty2014bayesianAOS} and \citet{narisetty2019skinnyJASA} or to any other value.
In such circumstances we can have $\delta_t = \| \bm{z}_t - \bm{z}_{t-1} \|_1=o(p)$ and 
even $\delta_t = o(\|\bm{z}_t\|_1)$, regardless of the magnitude of $\|\bm{z}_t\|_1$.

\paragraph{High auto-correlation.} 
High auto-correlation of the Gibbs sampler can also lead 
to smaller values of $\delta_t$ and hence $p_t$. 
We already see from Proposition \ref{prop:expected_comp_cost} that, even for  components $j$ with bimodal 
marginal posterior distributions, high auto-correlation between 
successive states $z_{j,t-1}$ and $z_{j,t}$ can yield lower 
$\delta_t$ in expectation. 
Proposition \ref{prop:pt_empirical_bound}, proved in 
Appendix~\ref{appendices:proofs}, provides an additional exact expression for $\delta_t$ in terms of the empirical 
correlation between $\bm{z}_t$ and $\bm{z}_{t-1}$.

\begin{proposition}[Swap count decomposition]\label{prop:pt_empirical_bound}
Let $\tau_t=\sqrt{\|\bm{z}_t\|_1(p-\|\bm{z}_t\|_1)}$ and 
$\rho_t$ be the empirical correlation between $\bm{z}_t$ and $\bm{z}_{t-1}$
(that is, the correlation between $z_{J, t}$ and $z_{J, t-1}$ when $J$ is  
uniform on $\{1, \ldots, p\}$). Then, for $\delta_t$ as given in Proposition \ref{prop:comp_cost}, 
\begin{equation} \label{eq:exactp}
    \delta_t = \|\bm{z}_t\|_1+\|\bm{z}_{t-1}\|_1 - \frac{2 \|\bm{z}_t\|_1 \|\bm{z}_{t-1}\|_1 + 2 \rho_t \tau_t \tau_{t-1}}{p}.
\end{equation}
\end{proposition}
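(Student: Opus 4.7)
The plan is to reduce $\delta_t$ to a simple quadratic expression in the coordinates of $\bm z_t$ and $\bm z_{t-1}$, and then reinterpret the resulting inner product through the empirical moments induced by a uniform index $J$.

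First, I would observe that since $z_{j,t}, z_{j,t-1} \in \{0,1\}$, we have $|z_{j,t} - z_{j,t-1}| = (z_{j,t} - z_{j,t-1})^2$. Summing over $j$ and expanding the square yields
\begin{equation*}
\delta_t \;=\; \sum_{j=1}^p (z_{j,t}-z_{j,t-1})^2 \;=\; \|\bm z_t\|_1 + \|\bm z_{t-1}\|_1 - 2\, \bm z_t^{\top} \bm z_{t-1},
\end{equation*}
using $z_{j,t}^2 = z_{j,t}$ by binarity. So it suffices to identify $\bm z_t^{\top} \bm z_{t-1}$ with the claimed combination of $\|\bm z_t\|_1$, $\|\bm z_{t-1}\|_1$, $\rho_t$, $\tau_t$, $\tau_{t-1}$.

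Next, I would compute the empirical moments with $J$ uniform on $\{1,\ldots,p\}$. Clearly $\EJ[z_{J,t}] = \|\bm z_t\|_1/p$ and, by binarity again, $\EJ[z_{J,t}^2] = \|\bm z_t\|_1/p$, so
\begin{equation*}
\VJ(z_{J,t}) \;=\; \frac{\|\bm z_t\|_1}{p} - \frac{\|\bm z_t\|_1^2}{p^2} \;=\; \frac{\|\bm z_t\|_1(p-\|\bm z_t\|_1)}{p^2} \;=\; \frac{\tau_t^2}{p^2},
\end{equation*}
and similarly for $t-1$. The cross moment is $\EJ[z_{J,t} z_{J,t-1}] = \bm z_t^{\top}\bm z_{t-1}/p$, giving
\begin{equation*}
\CJ(z_{J,t}, z_{J,t-1}) \;=\; \frac{\bm z_t^{\top} \bm z_{t-1}}{p} - \frac{\|\bm z_t\|_1 \|\bm z_{t-1}\|_1}{p^2}.
\end{equation*}

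By the definition of empirical correlation, $\rho_t = \CJ(z_{J,t}, z_{J,t-1})/\sqrt{\VJ(z_{J,t})\VJ(z_{J,t-1})} = (p\,\bm z_t^{\top}\bm z_{t-1} - \|\bm z_t\|_1\|\bm z_{t-1}\|_1)/(\tau_t \tau_{t-1})$. Rearranging gives $\bm z_t^{\top}\bm z_{t-1} = (\|\bm z_t\|_1\|\bm z_{t-1}\|_1 + \rho_t \tau_t \tau_{t-1})/p$, and substituting this into the first display yields exactly \eqref{eq:exactp}. Edge cases in which $\tau_t$ or $\tau_{t-1}$ vanishes (i.e.\ $\bm z_t$ or $\bm z_{t-1}$ is constant) must be handled by reading $\rho_t \tau_t \tau_{t-1}$ as $0$, which is consistent with both sides of \eqref{eq:exactp} in that degenerate case; beyond that, the argument is a direct algebraic unwinding and presents no real obstacle.
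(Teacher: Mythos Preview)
Your proposal is correct and follows essentially the same approach as the paper: both use binarity to write $\delta_t = \|\bm z_t\|_1 + \|\bm z_{t-1}\|_1 - 2\sum_j z_{j,t}z_{j,t-1}$, then identify the cross term via the empirical covariance decomposition $\EJ[z_{J,t}z_{J,t-1}] = \CJ + \EJ\EJ$ with $\VJ(z_{J,t}) = \tau_t^2/p^2$. Your added remark on the degenerate case $\tau_t\tau_{t-1}=0$ is a small bonus the paper omits.
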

Since $|\rho_t|\le 1$, Proposition \ref{prop:pt_empirical_bound} implies
\begin{flalign}
\delta_t &\ge \frac{\big(\|\bm{z}_t\|_1 - \|\bm{z}_{t-1}\|_1\big)^2}{p} + \frac{(\tau_t-\tau_{t-1})^2}{p} \ {\rm and} \label{eq:lowerbound}\\
\delta_t &\le \frac{\big(\|\bm{z}_t\|_1 - \|\bm{z}_{t-1}\|_1\big)^2}{p} + \frac{(\tau_t+\tau_{t-1})^2}{p}. \label{eq:upperbound}
\end{flalign}
The lower bound in \eqref{eq:lowerbound} is a good approximation to $\delta_t$ when $\rho_t$ is close to one,  which is the case either when the Gibbs sampler is converging (such that $\bm{z}_t$ becomes stable) or when it gets stuck (such that $\bm{z}_t$ changes slowly with $t$). In either case, $\tau_t$ exhibits similar ``stable/stuck'' behavior, implying that the lower bound itself will be close to zero. This suggests $\delta_t$ and hence $p_t$ is close to zero when $\rho_t$ is close to 1, even if $\min\{\|\bm{z}_t\|_1, p-\|\bm{z}_t\|_1\}$ is not negligible.

Motivated by such discussions and theoretical analysis, 
we now empirically examine how $p_t$ grows as the number of 
observations $n$, the number of covariates $p$,  
and the sparsity $s$ of the true signal varies. 
Figure \ref{fig:comp_cost_sims} is based on synthetic 
linear regression datasets. For each dataset, one Markov chain is
generated using Algorithm \ref{algo:linear_spike_slab}
to target the corresponding spike-and-slab posterior, 
from which the mean and one standard error bars of $(p_t)_{t = 5000}^{10000}$ are plotted. 
For Figure \ref{fig:comp_cost_sims} (Left), we consider datasets 
with $n=100$, varying $p$ with $p \geq n$,
and a sparse true signal $\bm{\beta}^* \in \mathbb{R}^p$ with 
components $\bm{\beta}^*_j = 2 \mathrm{I}\{j \leq s\}$
for sparsity $s=10$, and noise standard deviation $\sigma^*=2$. 
For Figure \ref{fig:comp_cost_sims} (Center), we consider datasets 
with $p=1000$, varying $n$ with $n \leq p$, $s=10$, and 
$\sigma^*=2$. For Figure \ref{fig:comp_cost_sims} (Right), we consider datasets 
with $n=10s$, $p=1000$, and $\sigma^*=2$ for varying $s \geq 1$.
Details of the synthetically generated datasets are in
Appendix \ref{appendices:datasets}. 

Figure \ref{fig:comp_cost_sims} (Left) shows that 
both $p_t$ is substantially smaller than both $p$ and $n$
and that it does not increase with the number of covariates $p$.
Figure \ref{fig:comp_cost_sims} (Center) shows that 
$p_t$ tends to zero as $n$ increases. 
Figure \ref{fig:comp_cost_sims} (Right) shows that  
$p_t$ decreases 
as the sparsity $s$ increases. All figures suggest that $p_t$ is 
controlled by $\delta_t$ in these settings, because $\|\bm{z}_t\|_1$ 
takes values close to $s$, and $p-\|\bm{z}_t\|_1$ tends 
to be much larger than $\|\bm{z}_t\|_1$.
Overall, Figure \ref{fig:comp_cost_sims}
highlights that not only does $p_t$ tend to be substantially
smaller than $p$, but it also tends to
be smaller than both $n$ and $s$. 
By Proposition \ref{prop:comp_cost}, 
this showcases the substantially lower computational cost of \sss 
compared to current implementations which cost
$\Omega(n^2p)$ per iteration.

\opt{icml}{
\begin{figure}[!]
\begin{center}
\centerline{\includegraphics[width=1\columnwidth]{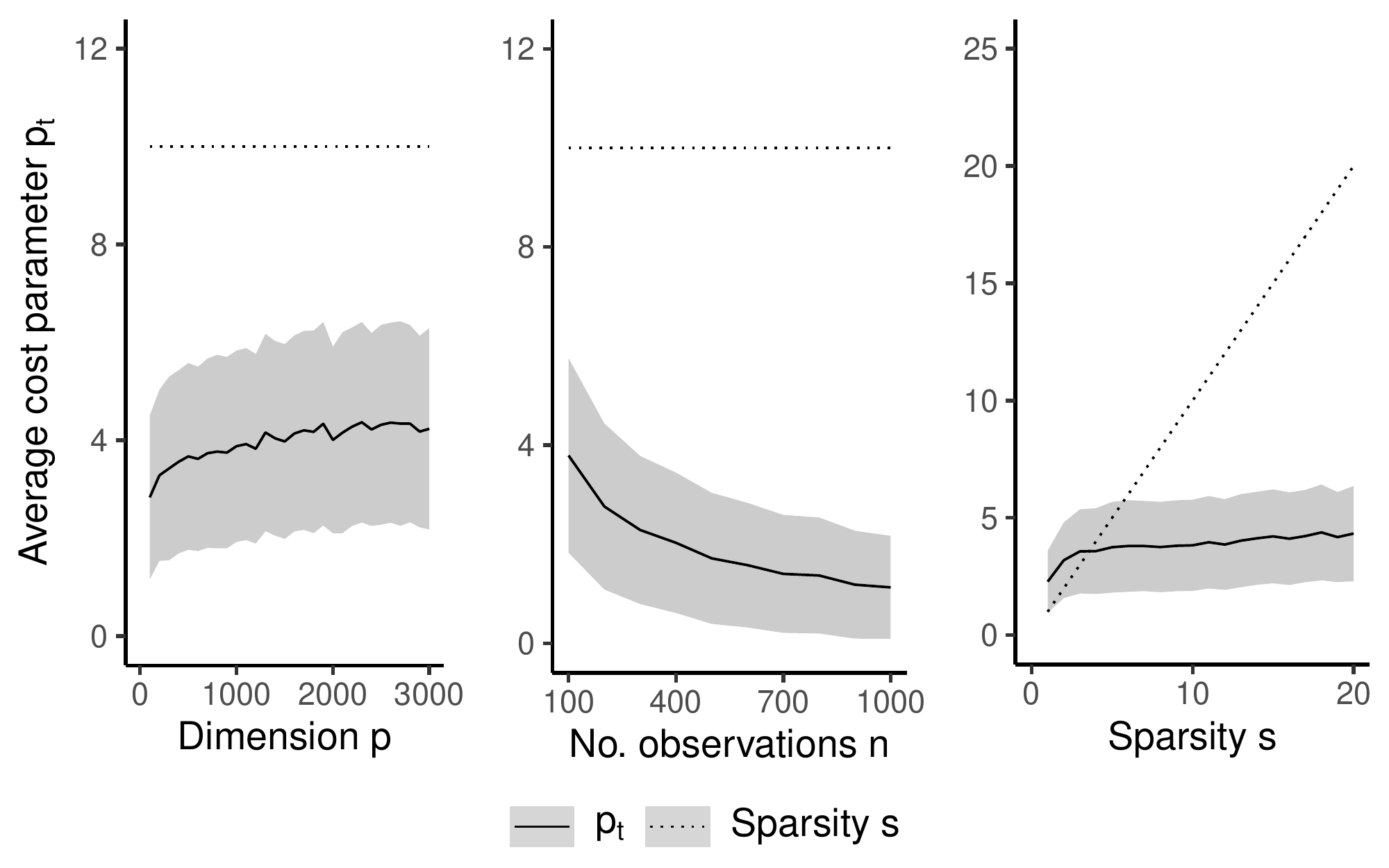}}
\vspace*{-0.05in}
\caption{The \sss cost parameter $p_t$ for averaged over iterations $5000 < t \leq 10000$
with one standard error bars, 
for synthetic linear regression datasets with varying number of 
covariates $p$ (\textbf{Left}), varying number of observations $n$ (\textbf{Center}), and 
varying sparsity $s$ (\textbf{Right}). The ground truth sparsity $s$ is also plotted for comparison.
See Section \ref{subsection:complexity_analysis} for details.}
\label{fig:comp_cost_sims}
\end{center}
\vskip -0.2in
\end{figure}
}

\opt{arxiv}{
\begin{figure}[!]
\vskip 0.2in
\begin{center}
\centerline{\includegraphics[width=1\columnwidth]{images/scaling_plot_arxiv.pdf}}
\vspace*{-0.05in}
\caption{The \sss cost parameter $p_t$ for averaged over iterations $5000 < t \leq 10000$
with one standard error bars, 
for synthetic linear regression datasets with varying number of 
covariates $p$ (Left), varying number of observations $n$ (Center), and 
varying sparsity $s$ (Right). The ground truth sparsity $s$ is also plotted for comparison.
See Section \ref{subsection:complexity_analysis} for details.}
\label{fig:comp_cost_sims}
\end{center}
\vskip -0.2in
\end{figure}
}

\subsection{Extensions to Scalable Spike-and-Slab} \label{section:S3plus}
With additional memory capacity and pre-computation, we can further improve the per-iteration costs of \sss. 

For the matrices $\bm{\tilde{M}}_{\tau_0}^{-1}$ and $\bm{\tilde{M}}_{\tau_1}^{-1}$ in Section \ref{subsection:S3}, suppose the matrices $\bm{X}^\top \bm{X}$, $\bm{X}^\top \bm{\tilde{M}}_{\tau_0}^{-1} \bm{X}$, and $\bm{X}^\top \bm{\tilde{M}}_{\tau_1}^{-1} \bm{X}$ are pre-computed. This initial step requires $\mathcal{O}(n p^2)$ computational cost and $\mathcal{O}(p^2)$ memory. Then the matrices $\bm{X}_{A_t}^\top \bm{X}_{A_t}$ and $\bm{X}_{A^c_t}^\top \bm{X}_{A^c_t}$ in \eqref{eq:pre_compute_correction1a}~--~\eqref{eq:pre_compute_correction1b} correspond to pre-computed sub-matrices of $\bm{X}^\top \bm{X}$, and
calculating $\bm{M}_t$ using \eqref{eq:pre_compute_correction1a}~--~\eqref{eq:pre_compute_correction1b}
at iteration $t$ involves matrix addition which only requires $\mathcal{O}(n^2)$ cost. Similarly, matrices $\bm{X}_{A_t}^\top \bm{\tilde{M}}_{\tau_0}^{-1} \bm{X}_{A_t}$ and $\bm{X}_{A^c_t}^\top \bm{\tilde{M}}_{\tau_1}^{-1} \bm{X}_{A^c_t}$ in \eqref{eq:pre_compute_correction2a}~--~\eqref{eq:pre_compute_correction2b} correspond to pre-computed sub-matrices of $\bm{X}^\top \bm{\tilde{M}}_{\tau_0}^{-1} \bm{X}$ and $\bm{X}^\top \bm{\tilde{M}}_{\tau_1}^{-1} \bm{X}$ respectively and do not need to be recalculated at iteration $t$. Therefore calculating $\big( (\tau_1^2-\tau_0^2)^{-1} \bm{I}_{\|\bm{z}_t\|_1} + \bm{X}_{A_t}^\top \bm{\tilde{M}}_{\tau_0}^{-1} \bm{X}_{A_t} \big)^{-1}$ or $\big( (\tau_0^2-\tau_1^2)^{-1} \bm{I}_{p-\|\bm{z}_t\|_1} + 
\bm{X}_{A^c_t}^\top \bm{\tilde{M}}_{\tau_1}^{-1} \bm{X}_{A^c_t} \big)^{-1}$ in \eqref{eq:pre_compute_correction2a}~--~\eqref{eq:pre_compute_correction2b} at each iteration $t$ only requires $\mathcal{O}(\|z_t\|_1^3)$ or $\mathcal{O}((p-\|z_t\|_1)^3)$ cost respectively. 

To sample from \eqref{eq:full_conditional_beta},  consider the cases $n \leq \min \{\|z_t\|_1, p-\|z_t\|_1\}$ and $\min \{\|z_t\|_1, p-\|z_t\|_1\} < n$ separately. When $n \leq \min \{\|z_t\|_1, p-\|z_t\|_1\}$, we calculate $\bm{M}_t^{-1}$ by directly inverting the calculated matrix $\bm{M}_t$ from \eqref{eq:pre_compute_correction1}, which requires $\mathcal{O}(n^3)$ cost.
When $\min \{\|z_t\|_1, p-\|z_t\|_1\} < n$, we avoid calculating $M_t^{-1}$ explicitly and instead calculate the matrix vector product $M_t^{-1}(\frac{1}{\sigma_t}\bm{y}-\bm{v})$ in Algorithm \ref{algo:fast_mvn_bhattacharya} right-to-left, using whichever expression in \eqref{eq:pre_compute_correction2a}~--~\eqref{eq:pre_compute_correction2b} has minimal computational cost. 
Overall, now the Gibbs samplers for linear and probit regression require only $\mathcal{O}(\max\{ \min \{\|z_t\|_1, p-\|z_t\|_1,n\}^3, np \})$ computational cost at iteration $t$. This provides lower computational cost for \sss linear and probit regression whenever $\min \{\|z_t\|_1, p-\|z_t\|_1,n\}^3 < n^2 p_t$. 
A similar extension for logistic regression requires only  $\mathcal{O}(\max\{ n^3, np \})$ computational cost at iteration $t$ and is given in Appendix \ref{appendices:logistic}.

\section{Comparison with Alternatives} \label{section:comparison}
In this section we compare \sss with the na\"{i}ve 
sampler, the SOTA exact MCMC sampler based on the sampling algorithm of
\citet{bhattacharya2016fastBIOMETRIKA}, and the Skinny Gibbs approximate MCMC sampler of \citet{narisetty2019skinnyJASA} for logistic regression. 
Table \ref{table:comparison} highlights the favorable computational cost of \sss compared to the na\"{i}ve and SOTA samplers. 
The Skinny Gibbs sampler typically has lower computational cost compared to \sss for logistic regression, and 
can have lower or higher computational cost than
\sss for probit regression depending on whether $\|\bm{z}_t\|_1^2 \le n p_t$ or not. 
However, unlike \sss, the Skinny Gibbs sampler does not converge to the correct posterior distribution.

\begin{table}[!]
\caption{\label{table:comparison}
Comparison of \sss with alternatives 
($p_t = \min \{ \|\bm{z}_t\|_1, p-\|\bm{z}_t\|_1, \delta_t \}$
for $\delta_t = \|\bm{z}_t-\bm{z}_{t-1}\|_1$). 
}
\vskip -0.2in
\begin{center}
\begin{scriptsize}
\begin{sc}
\begin{tabular}{lccr}
\toprule
MCMC Sampler & Cost & 
\makecell{Converges to \\
posterior} \\
\midrule 
Na\"{i}ve & $\Omega(p^3)$ & $\surd$ \\
State-of-the-art & $\Omega(n^2p)$ & $\surd$ \\
Skinny Gibbs & $\Omega(\max \{ n \|\bm{z}_t \|^2_1, np \})$ & $\times$ \\
$\mathbf{S^3}$ (linear and probit) & ${\mathcal{O}(\max \{ n^2 p_t, np \})}$ & $\mathbf{\surd}$ \\
$\mathbf{S^3}$ (logistic) & ${\mathcal{O}(\max \{ n^2 p_t, n^3, np \})}$ & $\mathbf{\surd}$ \\
\bottomrule
\end{tabular}
\end{sc}
\end{scriptsize}
\end{center}
\normalsize
\end{table}

To assess the practical impact of  computational cost and asymptotic bias,
Figures \ref{fig:time_comparison} and \ref{fig:stat_comparison} 
compares the numerical runtimes and statistical performance of 
\sss with the SOTA sampler and the Skinny Gibbs sampler.
For the Skinny Gibbs sampler, we use the $\mathrm{skinnybasad}$ $\mathrm{R}$ package 
of \citet{narisetty2019skinnyJASA}, which implements only logistic regression.
We consider synthetically generated datasets 
with a true signal $\bm{\beta}^* \in \mathbb{R}^p$ where 
$\beta^*_j = 2 \mathrm{I}\{j \leq s\}$ for sparsity $s$.
We consider datasets with $n=10s$ observations and 
$p=100s$ covariates for varying sparsity $s \geq 1$.
Details of the synthetically generated dataset are in Appendix
\ref{appendices:datasets}. For each synthetic dataset, 
we run \sss for logistic and probit regression and 
the Skinny Gibbs sampler for 1000 iterations, run the 
SOTA sampler for 100 iterations, and record the average 
time taken per iteration. All timings were obtained using a single core of 
an Apple M1 chip on a Macbook Air 2020 laptop with 16 GB RAM.

\opt{icml}{
\begin{figure}[!]
\vskip 0.2in
\begin{center}
\centerline{\includegraphics[width=1\columnwidth]{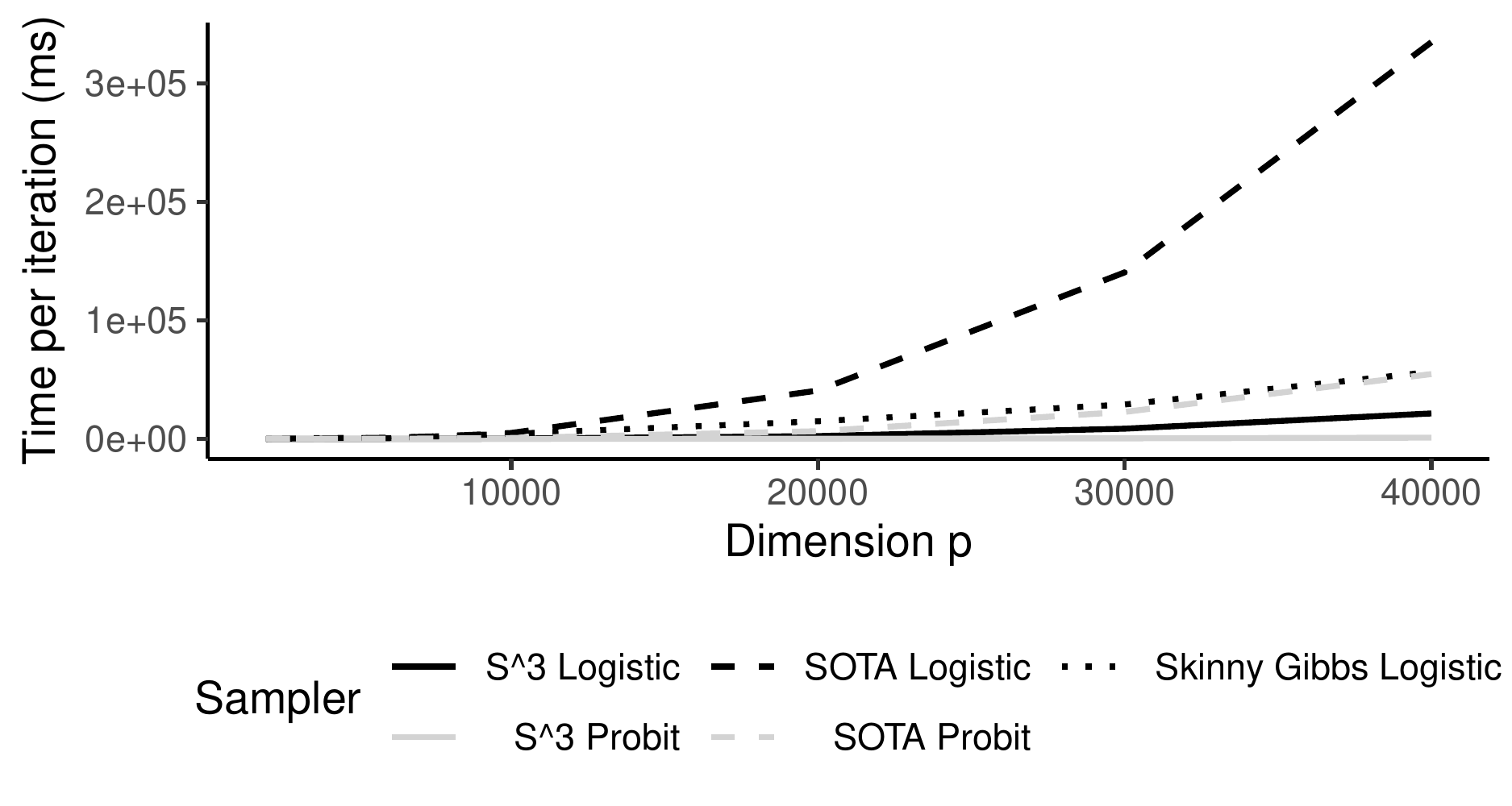}}
\vspace*{-0.05in}
\caption{Comparison of time per iteration between \sss, state-of-the-art (SOTA) 
exact MCMC sampler and the Skinny Gibbs approximate sampler of \citet{narisetty2019skinnyJASA}
on synthetic binary classification datasets. 
See Section \ref{section:comparison} for details.
}
\label{fig:time_comparison}
\end{center}
\vskip -0.2in
\end{figure}
}
\opt{arxiv}{
\begin{figure}[!]
\vskip 0.2in
\begin{center}
\centerline{\includegraphics[width=0.7\columnwidth]{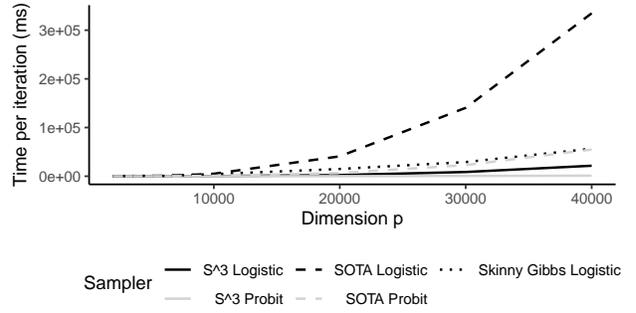}}
\vspace*{-0.05in}
\caption{Comparison of time per iteration between \sss, state-of-the-art (SOTA) 
exact MCMC sampler and the Skinny Gibbs approximate sampler of \citet{narisetty2019skinnyJASA}
on synthetic binary classification datasets. 
See Section \ref{section:comparison} for details.
}
\label{fig:time_comparison}
\end{center}
\vskip -0.2in
\end{figure}
}

Figure \ref{fig:time_comparison} 
highlights that the numerical runtimes of \sss are orders of magnitude faster than the SOTA sampler and comparable to the Skinny
Gibbs sampler. For example, for $n=4000$ 
observations, $p=40000$ covariates, and sparsity $s=400$,
\sss for logistic regression requires $21500$ms per iteration on average, which is approximately $15$ times faster than the SOTA sampler
for probit regression (which requires $335000$ms per iteration on average) and $2.5$ times faster than the Skinny Gibbs sampler 
(which requires $55600$ms per iteration on average), and \sss for probit regression requires $1100$ms per iteration on average, which is approximately $50$ times faster than the SOTA sampler
for probit regression (which requires $55800$ms per iteration on average).
For larger real-life datasets with hundreds of thousands of 
covariates, the numerical runtimes of \sss are similarly
favorable compared to the SOTA sampler. 
This is showcased in Section \ref{section:applications}, 
where for a genetics dataset, \sss is 
$50$ times faster than the SOTA sampler. 

\opt{icml}{
\begin{figure}[!]
\vskip 0.2in
\begin{center}
\centerline{\includegraphics[width=1\columnwidth]{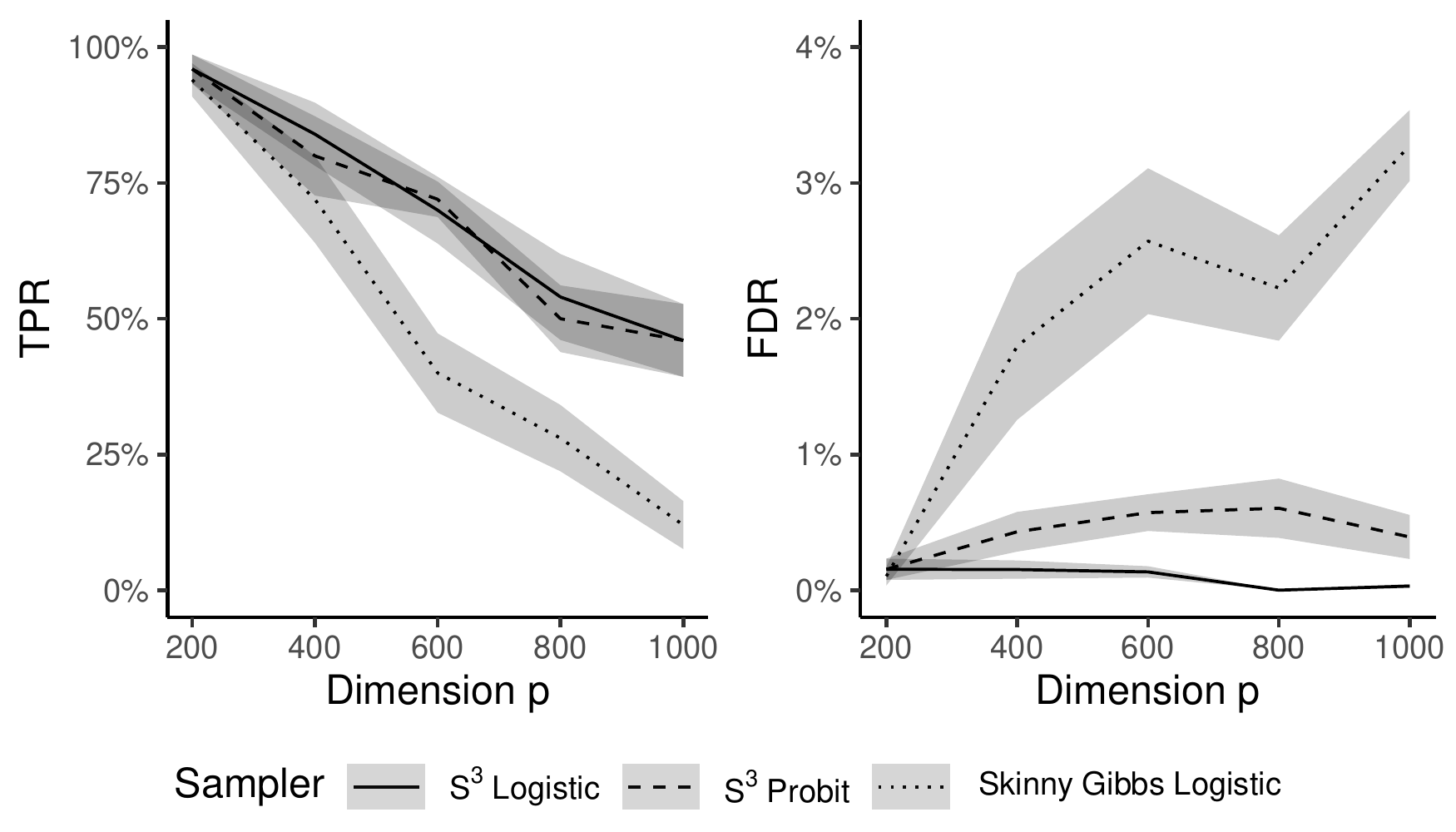}}
\vspace*{-0.05in}
\caption{
Average true positive rate (TPR) and false discovery rate (FDR) of 
\sss and the Skinny Gibbs approximate sampler \citep{narisetty2019skinnyJASA} across $20$ independently generated datasets with one standard error bars. 
See Section \ref{section:comparison} for details.
}
\label{fig:stat_comparison}
\end{center}
\vskip -0.2in
\end{figure}
}
\opt{arxiv}{
\begin{figure}[!]
\vskip 0.2in
\begin{center}
\centerline{\includegraphics[width=0.7\columnwidth]{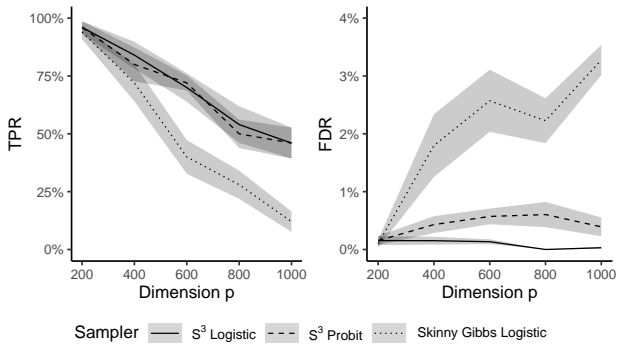}}
\vspace*{-0.05in}
\caption{
Average true positive rate (TPR) and false discovery rate (FDR) of 
\sss and the Skinny Gibbs approximate sampler \citep{narisetty2019skinnyJASA} across $20$ independently generated datasets with one standard error bars. 
See Section \ref{section:comparison} for details.
}
\label{fig:stat_comparison}
\end{center}
\vskip -0.2in
\end{figure}
}

Figure \ref{fig:stat_comparison} plots the true positive rate (TPR) 
and the false discovery rate (FDR) of variable selection based on samples 
from \sss for logistic and probit regression and Skinny Gibbs on synthetic binary classification datasets. 
To assess variable selection for signals of varying magnitude, 
we consider an exponentially decaying sparse true signal 
$\bm{\beta}^* \in \mathbb{R}^p$ such that $\beta^*_j = 2^{\frac{9-j}{4}}$
for $j \leq s$ and $\beta^*_j=0$ for $j>s$ for sparsity $s$.
The corresponding synthetically generated datasets have $n=100$
observations, varying number of covariates $p$ with $p\geq n$, and 
sparsity $s=5$. For each synthetic dataset, 
we implement \sss for logistic and probit regression 
and the Skinny Gibbs sampler for $5000$ iterations with a burn-in of $1000$ 
iterations and calculate the TPR and FDR from the samples. We use the same prior 
hyperparameters for all the algorithms, which are chosen 
according to \citet{narisetty2019skinnyJASA}. 
Additional experimental details are included in Appendices \ref{appendices:experiments}.
The SOTA sampler is not shown in Figure \ref{fig:stat_comparison}, 
as SOTA and \sss are alternative implementations of the same Gibbs sampler 
and by definition have the same statistical performance.
Figure \ref{fig:stat_comparison} shows that in 
higher dimensions, samples from \sss yield significantly higher TPR and lower FDR 
than the Skinny Gibbs sampler. We observe similar results for other choices 
of prior hyperparameters, which give \sss to either have comparable or more favorable
statistical performance to the Skinny Gibbs sampler. 

Overall, Figures \ref{fig:time_comparison} and \ref{fig:stat_comparison} highlight
that \sss can have comparable or even favorable computational cost to the 
Skinny Gibbs sampler, whilst having the correct stationary distribution and  
more favorable statistical properties in higher dimensions.
Appendix \ref{appendices:extra_experiments} contains additional simulation results 
showcasing \sss performance for individual datasets as the chain length and the total 
time elapsed varies.

\section{Applications} \label{section:applications}
We now examine the benefits of \sss on a diverse suite of regression and binary classification datasets. Table \ref{table:datasets}
summarizes the two synthetic and eight real-world datasets
considered, with further details in Appendix \ref{appendices:datasets}.

\begin{table}[t]
\caption{\label{table:datasets}
Synthetic and real-life datasets considered in
Section \ref{section:applications}. %
}
\vskip 0.15in
\begin{center}
\begin{scriptsize}
\begin{sc}
\begin{tabular}{lccc}
\toprule
Dataset & $n$ & $p$ & Response Type \\ 
\midrule
Borovecki & $31$ & $22283$ & Binary \\
Chin & $118$ & $22215$ & Binary \\
Chowdary & $104$ & $22283$ & Binary \\
Gordon & $181$ & $12533$ & Binary \\
Lymph & $148$ & $4514$ & Binary \\
Maize & $2266$ & $98385$ & Continuous \\
Malware & $373$ & $503$ & Binary  \\
PCR & $60$ & $22575$ & Continuous \\
Synthetic Binary & $1000$ & $50000$ & Continuous \\
Synthetic Continuous & $1000$ & $50000$ & Binary \\
\bottomrule
\end{tabular}
\end{sc}
\end{scriptsize}
\end{center}
\end{table}
We first consider the Gordon microarray dataset \citep{Gordon2002translationCANCER} with $n=181$ observations 
(corresponding to a binary response vector indicating presence of lung cancer)
and $p = 12533$ covariates (corresponding to genes expression levels). 
Figure \ref{fig:applications1} shows the marginal posterior 
probabilities estimated using samples from \sss and the SOTA sampler for logistic and probit regression and 
the Skinny Gibbs sampler for logistic regression, as well as the corresponding average runtimes per iteration. The marginal posterior probabilities $\pi_j\defeq \mathbb{P}_{\pi}(z_{j}=1)$ are estimated 
by $\hat\pi_j=\frac{1}{I(T-S)} \sum_{i=1}^I \sum_{t=S+1}^{T} z^{(i)}_{j,t}$, 
where $(\bm{z}_t^{(i)})_{t \geq 0}$ are samples from $i=1,\mydots,I$ independent Markov chains
generated using \sss. We sample $I=5$ independent chains of length $T=5000$ iterations with a burn-in of $1000$ iterations for both \sss and the SOTA sampler.
The average runtime per iteration with one standard error bars are calculated based on these independent chains. 

Figure \ref{fig:applications1} (Left) plots $\hat\pi_j$ against $j$ in the decreasing order of $\hat\pi_j$s. 
It shows $\hat\pi_j$s based on samples from both \sss and the SOTA sampler. 
We simulate both \sss and the SOTA sampler with the same random numbers at each iteration, 
so that any differences will be due to numerical imprecision. 
Figure \ref{fig:applications1} (Left) shows that the estimates using  \sss and the SOTA sampler are 
indistinguishable. Furthermore, in this example all components of $z_t$ are identical between \sss and the SOTA sampler
chains for all iterations $t$.
Despite producing Markov chains with indistinguishable marginal distributions and hence statistical properties, 
Figure \ref{fig:applications1} (Right) shows that \sss has approximately $20$ and $6$ times faster runtime per iteration than SOTA for logistic and probit
regression respectively. Furthermore, \sss for logistic regression has approximately $100$ times faster runtime per iteration than the Skinny Gibbs sampler. Overall, Figure \ref{fig:applications1} highlights the practical value of \sss over the SOTA sampler and the Skinny Gibbs sampler.

\begin{figure}[t]
\vskip 0.2in
\begin{center}
\centerline{\includegraphics[width=1\columnwidth]{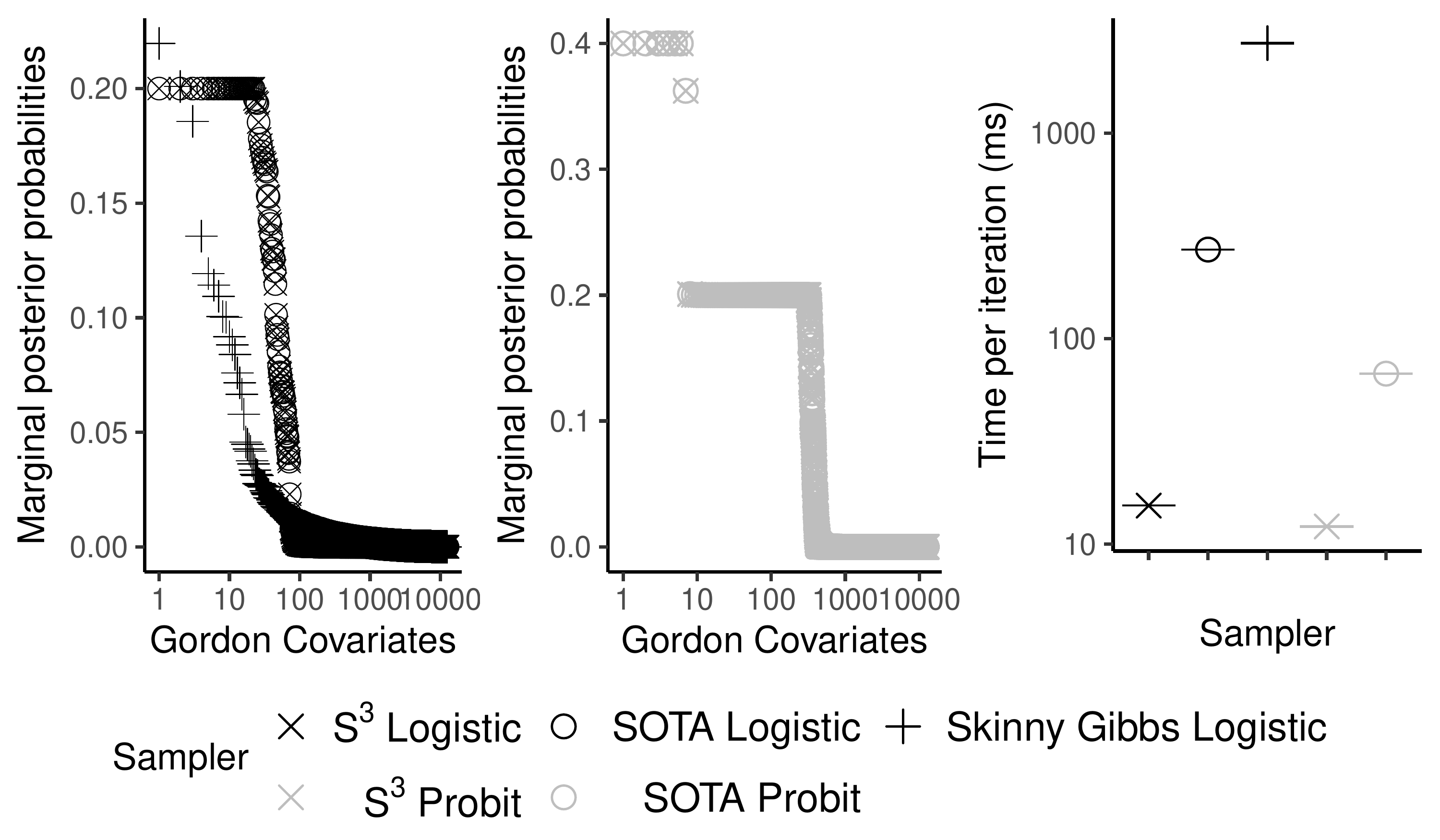}}
\vspace*{-0.05in}
\caption{Comparing Bayesian logistic and probit regression with \sss, SOTA, and Skinny Gibbs on the Gordon microarray dataset with $n=181$ observations and $p=12533$ covariates.
(\textbf{Left and Middle}) Marginal posterior probabilities $\mathbb{P}_{\pi}(z_{j}=1)$ estimated using samples from each chain: the recovered \sss and SOTA probabilities are indistinguishable but differ significantly from the Skinny Gibbs probabilities.
(\textbf{Right}) Average runtime per sampler  iteration with one standard error bars.
See Section \ref{section:applications} for details. 
}
\label{fig:applications1}
\end{center}
\vskip -0.2in
\end{figure}

Figure \ref{fig:applications2}, plotted with the y-axis on the log-scale, compares the runtimes of \sss and the SOTA sampler for linear and probit regression on ten regression and binary classification datasets respectively. 
It plots the average runtimes with one standard error bars
based on $10$ independent chains each of length $1000$ and $100$ for \sss and the SOTA sampler respectively. 
Figure \ref{fig:applications2} shows that \sss has lower runtimes per iteration compared to the SOTA sampler for all the datasets considered, with the most substantial speedups for larger datasets. For example, for the 
Maize GWAS dataset \citep{romay2013genotypingGENOMEBIO, liu2016iterativePLOSGENETICS, zeng2017nonparametricNATURECOMM}
with $n=2266$ observations (corresponding to average number of days taken for silk emergence
in different maize lines) and $p=98385$ covariates (corresponding to single nucleotide polymorphisms (SNPs) in the genome), \sss requires $650$ms per iteration on average, which is $48$ times faster than the SOTA sampler requiring $31300$ms per iteration. For researchers, such speedups can reduce algorithm runtime from days to hours, giving substantial time and computational cost savings at no compromise to inferential quality. 
Appendix \ref{appendices:extra_experiments} contains additional results of \sss applied to 
these datasets including effective sample size (ESS) calculations, marginal posterior probabilities, and 
performance under 10-fold cross-validation. 

\opt{icml}{
\begin{figure}[t]
\vskip 0.2in
\begin{center}
\centerline{\includegraphics[width=1\columnwidth]{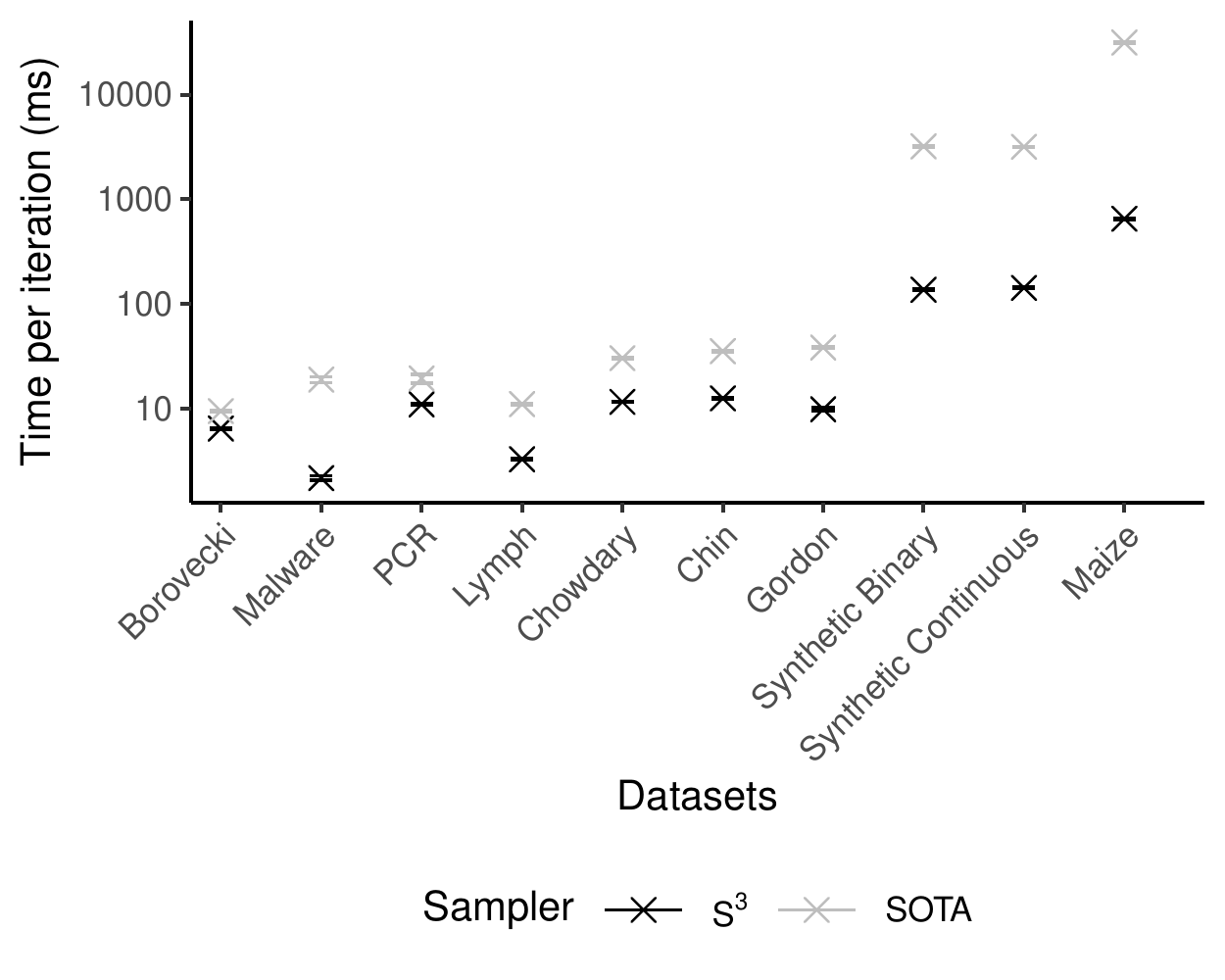}}
\vspace*{-0.05in}
\caption{
Average runtime per iteration with one standard error bars for \sss and the SOTA sampler
for linear and probit regression applied to the ten continuous and binary response datasets.
See Section \ref{section:applications} for details.}
\label{fig:applications2}
\end{center}
\vskip -0.2in
\end{figure}
}
\opt{arxiv}{
\begin{figure}[t]
\vskip 0.2in
\begin{center}
\centerline{\includegraphics[width=0.6\columnwidth]{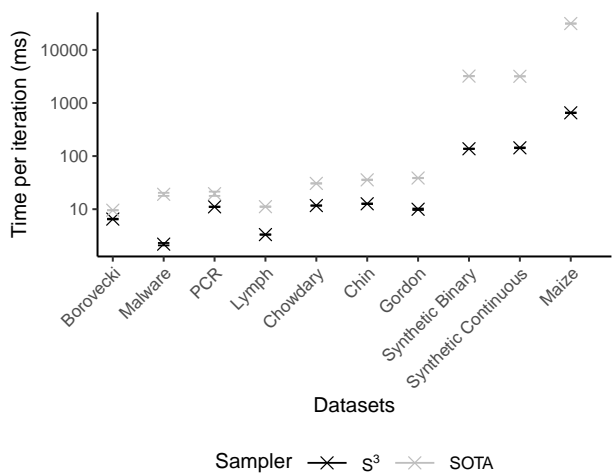}}
\vspace*{-0.05in}
\caption{
Average runtime per iteration with one standard error bars for \sss and the SOTA sampler
for linear and probit regression applied to the ten continuous and binary response datasets.
See Section \ref{section:applications} for details.}
\label{fig:applications2}
\end{center}
\vskip -0.2in
\end{figure}
}

\section{Further Work} \label{section:discussion}

The following questions arise from our work.

\textbf{Extensions of \sss to point-mass spike-and-slab 
priors, as well as to non-Gaussian tails.}
Whilst priors given in \eqref{eq:cont_spike_slab} are one of the most common formulations employed in practice, a number of alternatives are available. This includes \textit{point-mass} spike-and-slab priors 
\citep[e.g.,][]{mitchell1988bayesianJASA, johnson2012bayesianJASA}, where a degenerate Dirac distribution about zero is chosen for the spike part, 
and extensions which consider the heavier-tailed Laplace distribution 
for the slab part instead of a Gaussian distribution 
\citep{castillo2015bayesianAOS,rockova2018bayesianAOS, ray2020spikeNEURIPS, ray2021variationalJASA}. 
An extension of \sss would to be employ similar pre-computation
based strategy of Section \ref{section:mcmc} to MCMC samplers for these alternative formulations. 

\textbf{Convergence complexity analysis of \sss.}
An important question that is not addressed in this article 
is the number of iterations required for \sss or other similar samplers 
to converge to the their target posterior 
distributions. For Gibbs samplers targeting posteriors
corresponding to continuous shrinkage priors
\citep[e.g.,][]{carvalho2010theBIOMETRIKA, bhattacharya2015dirichletJASA,bhadra2019lasso}, 
much theoretical progress has been made
\citep{pal2014geometricEJS, qin2019convergenceAOS,bhattacharya2022geo,biswas2022coupled}.
Convergence of Gibbs samplers targeting spike-and-slab 
posteriors has been less extensively studied and 
requires more attention. 

\textbf{Diagnostics to assess the convergence of and 
asymptotic variance of \sss.}
Given some time and computational budget constraints, 
an immediate benefit of \sss is the ability to run longer 
Markov chains targeting spike-and-slab posteriors. This can 
alleviate some concerns linked to burn-in and asymptotic 
variance, but convergence and effective sample size diagnostics 
\citep{johnson1998coupling,biswas2019estimating,vats2020revisitingSS,vehtari2020rankBA}
remain an important consideration particularly in high-dimensional 
settings. We hope to investigate convergence diagnostics in 
future work.

\paragraph{Acknowledgments.}
We thank Juan Shen for sharing the PCR and the Lymph Node datasets, Xiaolei Liu and Xiang Zhou for sharing the Maize GWAS dataset, and Marina Vannucci for helpful feedback. NB was supported by the NSF grant DMS-1844695, a GSAS Merit Fellowship, and a Two Sigma Fellowship Award. XLM was partially supported by the NSF grant DMS-1811308.

\bibliography{references}
\opt{icml}{\bibliographystyle{icml2022_template/icml2022}}

\appendix
\onecolumn
\section{Proofs} \label{appendices:proofs}
\begin{proof}[Proof of Proposition \ref{prop:comp_cost}]
Consider Step $1$ of Algorithm \ref{algo:linear_spike_slab} and
Algorithm \ref{algo:logistic_probit_spike_slab} for probit regression. Given pre-computed matrices $\bm{\tilde{M}}_{\tau_0}$, 
$\bm{\tilde{M}}_{\tau_0}$, $\bm{M}_{t-1}$, $\bm{M}_{t-1}^{-1}$ 
and state $\bm{z}_{t-1}$, calculating $\bm{M}_{t}$, $\bm{M}_{t}^{-1}$ requires 
$\mathcal{O}(n^2 p_t)$ cost
by \eqref{eq:pre_compute_correction1} and \eqref{eq:pre_compute_correction2}, 
where $p_t = \min \{ \|\bm{z}_t\|_1, p-\|\bm{z}_t\|_1, \delta_t \}$ for $\delta_t = \|\bm{z}_t - \bm{z}_{t-1}\|_1$. 

Consider Step $1$ of Algorithm \ref{algo:logistic_probit_spike_slab} for logistic regression. This requires $\mathcal{O}( \max\{n^2 p_t, n^3\} )$ cost, where the $\mathcal{O}(n^2p_t)$ cost arises from the calculation of $\bm{M}_t$ using \eqref{eq:pre_compute_correction3} and the $\mathcal{O}(n^3)$ cost arises from inverting $\bm{M}_t$ to calculate $\bm{M}_t^{-1}$. 

Given $\bm{M}_{t}^{-1}$, Step $2$ of Algorithms \ref{algo:linear_spike_slab} and
\ref{algo:logistic_probit_spike_slab} then requires $\mathcal{O}(n p)$ cost, which arises from the matrix vector product $\bm{X}\bm{D}_t^{-\frac{1}{2}}r$ for $r \sim \mathcal{N}(0, \bm{I}_p)$ in Algorithm \ref{algo:fast_mvn_bhattacharya}. 
By component-wise independence, Step $3$ of Algorithms \ref{algo:linear_spike_slab} and
\ref{algo:logistic_probit_spike_slab} costs $\mathcal{O}(p)$ and 
Step $4$ of Algorithm \ref{algo:logistic_probit_spike_slab} cost $\mathcal{O}(n)$. 
Step $4$ of Algorithm \ref{algo:logistic_probit_spike_slab} and
Step $5$ of Algorithm \ref{algo:logistic_probit_spike_slab} for probit regression both cost $\mathcal{O}(1)$.
Step $5$ of Algorithm \ref{algo:logistic_probit_spike_slab} for logistic regression both costs $\mathcal{O}(n)$.

This gives an overall computational cost of $\mathcal{O}(\max \{n^2 p_t, np \})$ for
Algorithm \ref{algo:linear_spike_slab} and Algorithm \ref{algo:logistic_probit_spike_slab} for probit regression at 
iteration $t$, and a cost of $\mathcal{O}(\max \{n^2 p_t, n^3, np \})$ for
\ref{algo:logistic_probit_spike_slab} for logistic regression. 
\end{proof}

\begin{proof}[Proof of Proposition \ref{prop:expected_comp_cost}]
By linearity, $\mathbb{E}[\delta_t] = \sum_{j=1}^p \mathbb{P}(z_{j,t} \neq z_{j,t-1})$
where for each component $j$ the random variables $z_{j,t}$ and $z_{j,t-1}$ are on $\{0,1\}$. For each $j$, we obtain
\begin{flalign*}
\mathbb{P}(z_{j,t} \neq z_{j,t-1}) &= \mathbb{P}(z_{j,t}=1, z_{j,t-1}=0) + \mathbb{P}(z_{j,t}=0, z_{j,t-1}=1) \\
&= \big( \mathbb{P}(z_{j,t}=1) - \mathbb{P}(z_{j,t}=1, z_{j,t-1}=1) \big) + 
\big( \mathbb{P}(z_{j,t-1}=1) - \mathbb{P}(z_{j,t}=1, z_{j,t-1}=1) \big) \\
&= \big( \mathbb{P}(z_{j,t}=1) - \mathrm{cov}(z_{j,t}, z_{j,t-1}) - \mathbb{P}(z_{j,t}=1)\mathbb{P}(z_{j,t-1}=1) \big) + \\
& \quad \quad \big( \mathbb{P}(z_{j,t-1}=1) - \mathrm{cov}(z_{j,t}, z_{j,t-1}) - \mathbb{P}(z_{j,t}=1)\mathbb{P}(z_{j,t-1}=1) \big) \\
&= \mathbb{P}(z_{j,t}=1)\mathbb{P}(z_{j,t-1}=0) + \mathbb{P}(z_{j,t}=0)\mathbb{P}(z_{j,t-1}=1)
- 2 \mathrm{cov}(z_{j,t}, z_{j,t-1}).
\end{flalign*}
When $z_{j, t-1}$ follows the stationary $\pi$, $z_{j,t} \sim z_{j, t-1}$ and $\mathrm{var}(z_{j,t}) = \mathbb{P}(z_{j,t}=1)\mathbb{P}(z_{j,t-1}=0)$. Consequently,
\begin{flalign*}
\mathbb{P}(z_{j,t} \neq z_{j,t-1}) = 2 \mathrm{var}_\pi(z_{j,t}) - 2 \mathrm{cov}(z_{j,t}, z_{j,t-1}) = 2 \mathrm{var}_\pi(z_{j,t})(1 - \mathrm{corr}_\pi(z_{j,t}, z_{j,t-1})).
\end{flalign*}
\end{proof}

\begin{proof}[Proof of Proposition \ref{prop:pt_empirical_bound}]
Note that $a^2=a$ if $a$ only takes the value $0$ and $1$. This gives
\begin{equation}\label{eq:identity}
{\rm I}\{z_{j, t}\not= z_{j, t-1}\}=(z_{j, t}-z_{j, t-1})^2=z_{j, t}+z_{j, t-1}-2z_{j, t}z_{j, t-1}.
\end{equation}
Let $J$ be the uniform random variable on the integers $\{1, \ldots, p\}$. Then, 
\begin{equation}\label{eq:ptexact}
    \delta_t = p \big( \EJ(z_{J, t})+\EJ(z_{J, t-1}) - 2 \EJ(z_{J, t}z_{J, t-1}) \big),
\end{equation}
where the expectation is taken with respect to the random index $J$. Note that 
$\EJ(z_{J, t})=\|\bm{z}_t\|_1/p$ and $\VJ(z_{J, t})=(\|\bm{z}_t\|_1/p)(1-\|\bm{z}_t\|_1/p)=\tau_t^2/p^2$.
We obtain 
\begin{align}
    \EJ(z_{J, t}z_{J, t-1}) = \CJ(z_{J, t}, z_{J, t-1})+ \EJ(z_{J, t})\EJ(z_{J, t-1})
    = \frac{\rho_t \tau_t \tau_{t-1} + \|\bm{z}_t\|_1 \|\bm{z}_{t-1}\|_1}{p^2},\label{eq:etexact}
\end{align}
where $\rho_t=\cJ(z_{J, t}, z_{J, t-1})$. Combining \eqref{eq:ptexact}-\eqref{eq:etexact} yields \eqref{eq:exactp}. 
\end{proof}

\section{Algorithm Derivations} \label{appendices:algo_derivations}
\subsection{Linear regression with spike-and-slab priors}
For linear regression with the continuous spike-and-slab priors in \eqref{eq:cont_spike_slab}, 
the posterior density of $(\bm{\beta}, \bm{z}, \sigma^2) \in \mathbb{R}^p \times \{0,1\}^p \times (0,\infty)$ is given by
\begin{flalign} \label{eq:linear_reg_posterior}
\pi(\bm{\beta}, \bm{z}, \sigma^2 | \bm{y}) \propto &  \mathcal{N} (\bm{y}; \bm{X}\bm{\beta}, \sigma^2) \InvGamma \Big(\sigma^2 ; \frac{a_0}{2}, \frac{b_0}{2} \Big) \\
&\prod_{j=1}^p \big( q \mathcal{N}(\beta_j; 0, \sigma^2 \tau_1^2) \big)^{z_j} \big((1-q) \mathcal{N}(\beta_j; 0, \sigma^2 \tau_0^2)\big)^{1-z_j}.
\end{flalign}
From \eqref{eq:linear_reg_posterior}, we can calculate the conditional distributions. We have
\begin{flalign*}
\pi(\bm{\beta} | \bm{z}, \sigma^2, \bm{y}) &\propto  \mathcal{N} (\bm{y}; \bm{X}\bm{\beta}, \sigma^2 \bm{I}_n) \prod_{j=1}^p \mathcal{N}(\beta_j; 0, \sigma^2 \tau_1^2)^{z_j}  \mathcal{N}(\beta_j; 0, \sigma^2 \tau_0^2)^{1-z_j} \\
&\propto  \mathcal{N} (\bm{y}; \bm{X}\bm{\beta}, \sigma^2 \bm{I}_n) \mathcal{N}(\bm{\beta}; 0, \sigma^2 \bm{D}^{-1}) \text{ for } \bm{D} \defeq \Diag(\bm{z} \tau_1^{-2} + (\bm{\mathrm{1}}_p-\bm{z}) \tau_0^{-2}) \\
&\propto \exp \left\{ -\frac{1}{2\sigma^2 \bm{I}_n} \big( \bm{\beta}^\top \bm{X}^\top \bm{X} \bm{\beta} - 2\bm{\beta}^\top \bm{X}^\top \bm{y} + \bm{\beta}^\top \bm{D} \bm{\beta} \big) \right\}  \\
&\propto \mathcal{N}(\bm{\beta}; \bm{\Sigma}^{-1} \bm{X}^\top \bm{y}, \sigma^2 \bm{\Sigma}^{-1}) \text{ for } \bm{\Sigma} = \bm{X}^\top \bm{X} + \bm{D}, \\
\pi(\bm{z} | \bm{\beta}, \sigma^2, \bm{y}) &\propto \prod_{j=1}^p \big( q \mathcal{N}(\beta_j; 0, \sigma^2 \tau_1^2) \big)^{z_j} \big((1-q) \mathcal{N}(\beta_j; 0, \sigma^2 \tau_0^2)\big)^{1-z_j} \\
&\propto \prod_{j=1}^p \Bernoulli \Big( z_i; \frac{q \mathcal{N}(\beta_j; 0, \sigma^2 \tau_1^2)}{q \mathcal{N}(\beta_j; 0, \sigma^2 \tau_1^2) + (1-q) \mathcal{N}(\beta_j; 0, \sigma^2 \tau_0^2)} \Big), \text{ and } \\
\pi(\sigma^2 | \bm{\beta}, \bm{z}, \bm{y}) &\propto \mathcal{N} (\bm{y}; \bm{X}\bm{\beta}, \sigma^2) \InvGamma \Big(\sigma^2 ; \frac{a_0}{2}, \frac{b_0}{2} \Big) \mathcal{N} (\bm{\beta} ; 0, \sigma^2 \bm{D}^{-1} ) \\
&\propto \Big(\frac{1}{\sigma^2}\Big)^{\frac{n}{2}} \exp\left\{ -\frac{1}{2 \sigma^2} \| \bm{y} - \bm{X} \bm{\beta} \|_2^2 \right\} \Big( \frac{1}{\sigma^2}\Big)^{\frac{a_0}{2}} \exp \left\{ -\frac{1}{2 \sigma^2} b_0 \right\} \Big(\frac{1}{\sigma^2}\Big)^{\frac{p}{2}} \exp \left\{-\frac{1}{2 \sigma^2} \bm{\beta}^\top \bm{D} \bm{\beta} \right\} \\
&\propto \InvGamma \Big( \sigma^2 ; \frac{a_0+n+p}{2}, \frac{b_0 + \| \bm{y} - \bm{X} \bm{\beta} \|_2^2 + \bm{\beta}^\top \bm{D} \bm{\beta}}{2} \Big)
\end{flalign*}
as given in Algorithm \ref{algo:linear_spike_slab}.

\subsection{Probit regression with spike-and-slab priors} \label{appendices:probit}
Consider the probit regression likelihood, where for each observation $i=1,\mydots,n$,
$\mathbb{P}( y_i = 1 | \bm{x_i}, \bm{\beta}) = 1 - \mathbb{P}( y_i = 0 | \bm{x_i}, \bm{\beta}) =
\Phi( \bm{x_i}^\top \bm{\beta} )$
for $\bm{x_i}^\top$ the $i$-th row of the design matrix $\bm{X}$.
and $\Phi$ that cumulative density function of a univariate Normal distribution.  
We obtain $y_i = \mathrm{I} \{ \tilde{y}_i>0 \}$ for 
$\tilde{y}_i | \bm{\beta} \sim \mathcal{N}(\bm{x_i}^\top \bm{\beta}, 1)$.
The Bayesian probit regression model is then given by 
\begin{flalign}
z_j &\overset{i.i.d.}{\sim} \Bernoulli(q) \quad \text{ for all } j=1,\mydots,p \nonumber \\
\beta_j | z_j &\overset{ \ ind \ }{\sim} (1-z_j) \mathcal{N}(0, \tau_0^2) + z_j \mathcal{N}(0, \tau_1^2) 
\quad \text{ for all } j=1,\mydots,p \label{eq:probit_spike_slab}  \\
\tilde{y}_i | \bm{\beta} &\overset{ \ ind \ }{\sim} \mathcal{N}(\bm{x_i}^\top \bm{\beta}, 1) 
\quad \text{ for all } i=1,\mydots,n \nonumber \\
y_i &\overset{ \ \ \quad \ \ }{=} \mathrm{I}\{ \tilde{y}_i > 0 \} \quad \text{ for all } i=1,\mydots,n.\nonumber
\end{flalign}
For the prior and likelihood in \eqref{eq:probit_spike_slab}, the posterior density 
of $(\bm{\beta}, \bm{z}, \bm{\tilde{y}}) \in \mathbb{R}^p \times \{0,1\}^p \times \mathbb{R}^p$ is given by
\begin{flalign}
\pi(\bm{\beta}, \bm{z}, \bm{\tilde{y}} | \bm{y}) \propto & \Big( \prod_{i=1}^n \mathrm{I} \big\{ \mathrm{I} \{\tilde{y}_i>0\} = y_i \big\} \mathcal{N} (\tilde{y}_i; \bm{x_i}^\top \bm{\beta}, 1) \Big) 
\nonumber \\ 
& \prod_{j=1}^p \big( q \mathcal{N}(\beta_j; 0, \tau_1^2) \big)^{z_j} \big((1-q) \mathcal{N}(\beta_j; 0, \tau_0^2)\big)^{1-z_j}.
\label{eq:probit_reg_posterior}
\end{flalign}
From \eqref{eq:probit_reg_posterior}, we can calculate the conditional distributions. We obtain 
\begin{flalign*}
\pi(\bm{\beta} | \bm{z}, \bm{\tilde{y}}, \bm{y}) & \propto \mathcal{N}(\bm{\tilde{y}}; \bm{X} \bm{\beta},  \bm{I}_n) \mathcal{N}(\bm{\beta}; 0, \bm{D}^{-1}) \text{ for } \bm{D} \defeq \Diag(\bm{z} \tau_1^{-2} + (\bm{\mathrm{1}}_p-\bm{z}) \tau_0^{-2}) \\
& \propto \mathcal{N}(\bm{\beta}; \bm{\Sigma}^{-1} \bm{X}^\top \bm{\tilde{y}}, \bm{\Sigma}^{-1}) \text{ for } \bm{\Sigma} = \bm{X}^\top \bm{X} + \bm{D},  \\
\pi(\bm{z} | \bm{\beta}, \bm{\tilde{y}}, \bm{y}) &\propto \prod_{j=1}^p \Bernoulli \Big( z_i; \frac{q \mathcal{N}(\beta_j; 0, \tau_1^2)}{q \mathcal{N}(\beta_j; 0, \tau_1^2) + (1-q) \mathcal{N}(\beta_j; 0,  \tau_0^2)} \Big),  \text{ and }\\
\pi(\bm{\tilde{y}} | \bm{\beta}, \bm{z}, \bm{y}) &\propto \prod_{i=1}^n \mathcal{N} (\tilde{y}_i; \bm{x_i}^\top \bm{\beta}, 1) \mathrm{I} \big\{ \mathrm{I} \{\tilde{y}_i>0\} = y_i \big\}. 
\end{flalign*}
as required for probit regression in Algorithm \ref{algo:logistic_probit_spike_slab}.

\subsection{Logistic regression with spike-and-slab priors} 
\label{appendices:logistic}
We first describe the Bayesian logistic regression model considered.
Consider the logistic regression likelihood, where for each observation $i=1,\mydots,n$,
$\mathbb{P}( y_i = 1 | \bm{x_i}, \bm{\beta}) = 1 - \mathbb{P}( y_i = 0 | \bm{x_i}, \bm{\beta}) = \frac{\exp(\bm{x_i}^\top \bm{\beta})}{1+\exp(\bm{x_i}^\top \bm{\beta})}$ 
for $\bm{x_i}^\top$ the $i$-th row of the design matrix $\bm{X}$. 
We obtain $y_i = \mathrm{I} \{ \tilde{y}_i>0 \}$ where $\tilde{y}_i \overset{ind}{\sim} \mathrm{Logistic}(\bm{x_i}^\top \bm{\beta}, 1)$, corresponding to the logistic distribution centered about $\bm{x_i}^\top \bm{\beta}$ and scale parameter $1$. 

\subsubsection{Student's $t$-distribution based approximation of the logistic regression likelihood.}
Following \citet{obrien2004bayesianBIOMETRICS} and \citet{narisetty2019skinnyJASA}, we can approximate $\mathrm{Logistic}(\bm{x_i}^\top \bm{\beta}, 1)$ with $\bm{x_i}^\top \bm{\beta} + w t_{\nu}$, where $t_{\nu}$ denotes a $t$-distribution with $\nu$ degrees of freedom and $w$ is a multiplicative factor. The constants $\nu \defeq 7.3$ and $w^2 \defeq \frac{\pi^2(\nu-2)}{3 \nu}$ are chosen following \citet{obrien2004bayesianBIOMETRICS}, 
in order to match the variance of the logistic distribution and to minimize the integrated squared distance between the respective densities. The Gaussian scale representation of this $t$-distribution is 
\begin{equation}
    \tilde{y}_i | \bm{x_i}, \bm{\beta}, \tilde{\sigma}_i  \sim \mathcal{N}(\bm{x_i}^\top \bm{\beta}, \tilde{\sigma}_i^2),  \quad \tilde{\sigma}_i^2 \sim \InvGamma \Big(\frac{v}{2}, \frac{w^2 v}{2} \Big), \label{eq:logistic_approx}
\end{equation}
where each $\tilde{\sigma}_i^2$ is an augmented variable. The Bayesian logistic regression model is then given by
\begin{flalign}
z_j &\overset{i.i.d.}{\sim} \Bernoulli(q) \quad \text{ for all } j=1,\mydots,p \nonumber \\
\beta_j | z_j &\overset{ \ ind \ }{\sim} (1-z_j) \mathcal{N}(0, \tau_0^2) + z_j \mathcal{N}(0, \tau_1^2) 
\quad \text{ for all } j=1,\mydots,p \label{eq:logistic_spike_slab}  \\
\tilde{\sigma}^2_i &\overset{i.i.d.}{\sim} \InvGamma(\frac{\nu}{2}, \frac{w^2\nu}{2}) \nonumber \\
\tilde{y}_i | \bm{\beta}, \tilde{\sigma}^2_i &\overset{ \ ind \ }{\sim} \mathcal{N}(\bm{x_i}^\top \bm{\beta}, \tilde{\sigma}_i^2) 
\quad \text{ for all } i=1,\mydots,n \nonumber \\
y_i &\overset{ \ \ \quad \ \ }{=} \mathrm{I}\{ \tilde{y}_i > 0 \} \quad \text{ for all } i=1,\mydots,n.\nonumber
\end{flalign}
Let $\bm{\tilde{\sigma}}^2$ denote the vector with entries $\tilde{\sigma}_i^2$ for $i=1,\mydots,n$. For the  prior and likelihood in \eqref{eq:logistic_spike_slab}, the posterior density of $(\bm{\beta}, \bm{z}, \bm{\tilde{y}}, \bm{\tilde{\sigma}}^2)$ on $\mathbb{R}^p \times \{0,1\}^p \times \mathbb{R}^n \times (0,\infty)^n$ is given by
\begin{flalign}
\pi(\bm{\beta}, \bm{z}, \bm{\tilde{y}}, \bm{\tilde{\sigma}}^2 | \bm{y}) 
\propto 
& \prod_{j=1}^p \big( q \mathcal{N}(\beta_j; 0, \tau_1^2) \big)^{z_j} \big((1-q) \mathcal{N}(\beta_j; 0, \tau_0^2)\big)^{1-z_j}
\nonumber \\ 
& \prod_{i=1}^n \mathrm{I} \big\{ \mathrm{I} \{\tilde{y}_i>0\} = y_i \big\} \mathcal{N} (\tilde{y}_i; \bm{x_i}^\top \bm{\beta}, \tilde{\sigma}_i^2) \InvGamma \Big(\tilde{\sigma}_i^2 ; \frac{\nu}{2}, \frac{w^2 \nu}{2} \Big).
\label{eq:logistic_reg_posterior}
\end{flalign}
From \eqref{eq:logistic_reg_posterior}, we can calculate the conditional distributions. Let $\bm{W}=\Diag(\bm{\tilde{\sigma}}^2)$. We obtain 
\begin{flalign*}
\pi(\bm{\beta} | \bm{z}, \bm{\tilde{y}}, \bm{\tilde{\sigma}}^2, \bm{y}) & \propto \mathcal{N}(\bm{\tilde{y}}; \bm{X} \bm{\beta}, \bm{W}) \mathcal{N}(\bm{\beta}; 0, \bm{D}^{-1}) \text{ for } \bm{D} \defeq \Diag(\bm{z} \tau_1^{-2} + (\bm{\mathrm{1}}_p-\bm{z}) \tau_0^{-2}) \\
& \propto \mathcal{N}(\bm{\beta}; \bm{\Sigma}^{-1} \bm{X}^\top \bm{W}^{-1} \bm{\tilde{y}}, \bm{\Sigma}^{-1}) \text{ for } \bm{\Sigma} = \bm{X}^\top \bm{W}^{-1} \bm{X} + \bm{D}, \\
\pi(\bm{z} | \bm{\beta}, \bm{\tilde{y}}, \bm{\tilde{\sigma}}^2, \bm{y}) &\propto \prod_{j=1}^p \Bernoulli \Big( z_i; \frac{q \mathcal{N}(\beta_j; 0, \tau_1^2)}{q \mathcal{N}(\beta_j; 0,  \tau_1^2) + (1-q) \mathcal{N}(\beta_j; 0, \tau_0^2)} \Big), \\
\pi(\bm{\tilde{y}} | \bm{\beta}, \bm{z}, \bm{\tilde{\sigma}}^2, \bm{y}) &\propto \prod_{i=1}^n \mathcal{N} (\tilde{y}_i; \bm{x_i}^\top \bm{\beta}, \tilde{\sigma}_i^2) \mathrm{I} \big\{ \mathrm{I} \{\tilde{y}_i>0\} = y_i \big\}  , \text{ and } \\
\pi(\bm{\tilde{\sigma}}^2 | \bm{\beta}, \bm{z}, \bm{\tilde{y}}, \bm{y}) &\propto \prod_{i=1}^n \mathcal{N} (\tilde{y}_i; \bm{x_i}^\top \bm{\beta}, \tilde{\sigma}_i^2)
\InvGamma \Big(\tilde{\sigma}_i^2 ; \frac{\nu}{2}, \frac{w^2 \nu}{2} \Big) \\
&\propto \prod_{i=1}^n \InvGamma \Big( \tilde{\sigma}_i^2 ; \frac{\nu+1}{2}, \frac{w^2 \nu + (\tilde{y}_i - \bm{x_i}^\top \bm{\beta} )^2}{2} \Big).
\end{flalign*}
as required for logistic regression in Algorithm \ref{algo:logistic_probit_spike_slab}.

\paragraph{A scalable Gibbs sampler for logistic regression.}
The computational bottleneck of existing Gibbs samplers for logistic regression is linked to sampling from the full conditional of $\bm{\beta} \in \mathbb{R}^p$. This is given by 
\begin{equation} \label{eq:full_conditional_beta_logistic}
    \bm{\beta}_{t+1} | \bm{z}_t, \bm{\tilde{\sigma}}^2_t \sim 
    \mathcal{N} \big(\bm{\Sigma}^{-1}_t \bm{X}^\top \bm{W}_t^{-1} \bm{\tilde{y}}, \bm{\Sigma}^{-1}_t \big)\quad {\rm for}\quad  \bm{\Sigma}_t = \bm{X}^\top \bm{W}_t^{-1} \bm{X} + \bm{D}_t,
\end{equation}
where $t$ indexes the iteration of the Markov chain, 
$\bm{W}_t$ is the diagonal matrix with the vector
$\bm{\tilde{\sigma}}^2_t$ populating its diagonal elements, and $\bm{D}_t$ is the diagonal matrix with the vector $\bm{z}_t \tau^{-2}_1 + (\bm{\mathrm{1}}_p-\bm{z}_t) \tau^{-2}_0$ populating its diagonal elements. To sample from \eqref{eq:full_conditional_beta_logistic}, we can use
the $\Omega(n^2p)$ sampler of \citet{bhattacharya2016fastBIOMETRIKA}, which is given in Algorithm \ref{algo:fast_mvn_bhattacharya_logistic}. 

\begin{algorithm}[tb]
\caption{An $\Omega(n^2p)$ sampler of \eqref{eq:full_conditional_beta_logistic}
\citep{bhattacharya2016fastBIOMETRIKA}}
   \label{algo:fast_mvn_bhattacharya_logistic}
\begin{algorithmic}
  \STATE Sample $\bm{r} \sim \mathcal{N}(0, \bm{I}_p)$, $\bm{\xi} \sim \mathcal{N}(0,\bm{I}_n)$.
  \STATE Set $\bm{u} = \bm{D}_t^{-\frac{1}{2}}\bm{r}$ and calculate $\bm{v} = \bm{W}_t^{-1/2}\bm{X}\bm{u} + \bm{\xi}$.
  \STATE Set $\bm{v}^* = \bm{M}_t^{-1} ( \bm{W}_t^{-1/2} \bm{\tilde{y}} - \bm{v})$ for $\bm{M}_t = \bm{I}_n + \bm{W}_t^{-1/2} \bm{X} \bm{D}_t^{-1} \bm{X}^\top \bm{W}_t^{-1/2}$.
  \STATE {\bfseries Return} $\bm{\beta} = \bm{u} +  \bm{D}_t^{-1} \bm{X}^\top \bm{W}_t^{-1/2} \bm{v}^*$.
\end{algorithmic}
\end{algorithm}

\begin{subequations}
Following the strategy in Section \ref{subsection:S3}, \sss for logistic regression uses pre-computation to reduce the computational cost of Algorithm \ref{algo:fast_mvn_bhattacharya_logistic}. Using the notation from Section \ref{subsection:S3} with $\bm{M}_t \defeq \bm{I}_n + \bm{W}_t^{-1/2} \bm{X} \bm{D}_t^{-1} \bm{X}^\top \bm{W}_t^{-1/2}$, we note
\begin{flalign}
\bm{M}_t &= \bm{I}_n +
\bm{W}_t^{-1/2} \big( 
\bm{\tilde{M}}_{\tau_0} - I_n 
+ (\tau_1^2-\tau_0^2)\bm{X}_{A^c_t} \bm{X}_{A^c_t}^T
\big) \bm{W}_t^{-1/2} \label{eq:pre_compute_correction3a} \\
&= \bm{I}_n +
\bm{W}_t^{-1/2} \big( 
\bm{\tilde{M}}_{\tau_1} - I_n 
+ (\tau_0^2-\tau_1^2)\bm{X}_{A^c_t} \bm{X}_{A^c_t}^T
\big) \bm{W}_t^{-1/2} \label{eq:pre_compute_correction3b} \\
&= \bm{I}_n +
\bm{W}_t^{-1/2} \big(
\bm{W}_{t-1}^{1/2}(\bm{M}_{t-1} - I_n)\bm{W}_{t-1}^{1/2} + \bm{X}_{\Delta_t} \bm{C}_{\Delta_t} \bm{X}_{\Delta_t}^T
\big) \bm{W}_t^{-1/2} \label{eq:pre_compute_correction3c}.
\end{flalign}
In \eqref{eq:pre_compute_correction3a}~--~\eqref{eq:pre_compute_correction3c},
calculating the matrix products 
$\bm{X}_{A_t} \bm{X}_{A_t}^\top$, $\bm{X}_{A^c_t} \bm{X}_{A^c_t}^\top$, and $\bm{X}_{\Delta_t}  \bm{C}_{\Delta_t} \bm{X}_{\Delta_t}^\top$ requires 
$\mathcal{O}(n^2 \|\bm{z}_t\|_1)$, $\mathcal{O}(n^2 (p-\|\bm{z}_t\|_1))$, and
$\mathcal{O}(n^2 \delta_t)$ cost respectively. Given 
$\bm{\tilde{M}}_{\tau_0}$, $\bm{\tilde{M}}_{\tau_1}$, $\bm{M}_{t-1}$, and $\bm{z}_{t-1}$, 
we evaluate whichever matrix product in 
\eqref{eq:pre_compute_correction3a}~--~\eqref{eq:pre_compute_correction3c} has minimal computational cost and thereby calculate $\bm{M}_t$ at the reduced cost of 
$\mathcal{O}(n^2 p_t)$ where $p_t \defeq \min \{ \|\bm{z}_t\|_1, p-\|\bm{z}_t\|_1, \delta_t \}$.
\label{eq:pre_compute_correction3}
\end{subequations}
To calculate $\bm{M}_t^{-1}$, we calculate $\bm{M}_t^{-1}$ by directly inverting the calculated matrix $\bm{M}_t$, which requires $\mathcal{O}(n^3)$ cost. Overall, this strategy 
reduces the computational cost of calculating the matrices $\bm{M}_{t}$ and $\bm{M}_{t}^{-1}$ from
$\Omega(n^2p)$ to $\mathcal{O}(\max\{n^2 p_t,n^3\})$.

\paragraph{Extensions to Scalable Spike-and-Slab for logistic regression.}
Suppose the matrices $\bm{X}^\top \bm{X}$ is pre-computed. This initial step requires $\mathcal{O}(n p^2)$ computational cost and $\mathcal{O}(p^2)$ memory. Then the matrices $\bm{X}_{A_t}^\top \bm{X}_{A_t}$ and $\bm{X}_{A^c_t}^\top \bm{X}_{A^c_t}$ in \eqref{eq:pre_compute_correction3a}~--~\eqref{eq:pre_compute_correction3b} correspond to pre-computed sub-matrices of $\bm{X}^\top \bm{X}$, and
calculating $\bm{M}_t$ using \eqref{eq:pre_compute_correction3a}~--~\eqref{eq:pre_compute_correction3b}
each iteration $t$ involves matrix addition and diagonal matrix multiplication which only requires $\mathcal{O}(n^2)$ cost. To sample from \eqref{eq:full_conditional_beta_logistic}, we calculate $\bm{M}_t^{-1}$ by directly inverting the calculated matrix $\bm{M}_t$ from \eqref{eq:pre_compute_correction1}, which requires $\mathcal{O}(n^3)$ cost.
Overall, now the Gibbs samplers for logistic regression requires $\mathcal{O}(\max\{ n^3, np \})$ computational cost at iteration $t$, 
which is an improvement compared to \sss.

\section{Experiment Details} \label{appendices:experiments}
\paragraph{Figure \ref{fig:stat_comparison} of Section \ref{section:comparison}.}
In Figure \ref{fig:stat_comparison}, we use the same prior 
hyperparameters for all the algorithms. Following \citet{narisetty2019skinnyJASA},
we choose $\tau_0^2 = \frac{1}{n}$, $\tau_1^2 = \max \{ \frac{p^{2.1}}{100n},1 \}$ and  
$q = \mathbb{P}(z_j=1)$ such that 
$\mathbb{P}(\sum_{j=1}^p \mathrm{I}\{z_j=1\} > K)=0.1$ for 
$K=\max \{ 10, \log n \}$. 
The true positive rate (TPR) and the false discovery rate (FDR) correspond to the proportion of non-zero and zero components of $\beta^*_j$ that are correctly selected respectively. They are calculated as 
$\frac{1}{s} \sum_{j=1}^s \mathrm{I} \{ \mathbb{P}_{\pi}(z_{j}=1) > 0.5 \}$
and $ \frac{1}{p-s} \sum_{j=s+1}^p \mathrm{I} \{ \mathbb{P}_{\pi}(z_{j}=1) > 0.5 \}$ respectively, where the marginal posterior probabilities $\pi_j \defeq \mathbb{P}_{\pi}(z_{j}=1)$ are estimated by $\hat{\pi}_j \defeq \frac{1}{4000} \sum_{t=\mathrm{1001}}^{\mathrm{5000}} z_{j,t}$
for sample points $(\bm{z}_{t})_{t \geq 0}$ generated using \sss or Skinny Gibbs. 
The lines in Figure \ref{fig:stat_comparison} correspond to the average TPR and FDR across $20$ independently generated datasets, and the grey bands correspond to one standard error of the averages. 

\section{Dataset Details}
\label{appendices:datasets}
\paragraph{Synthetic continuous response dataset in Section \ref{subsection:complexity_analysis}.}
In Figure \ref{fig:comp_cost_sims}, synthetic linear regression datasets 
are considered. For number of observations $n$ and number of covariates $p$, 
we generate a design matrix $\bm{X} \in \mathbb{R}^{n \times p}$ such that 
each $[\bm{X}]_{i,j} \overset{i.i.d.}{\sim} \mathcal{N}(0, 1)$ for all 
$1 \leq i \leq n$ and $1 \leq j \leq p$, which is then scaled to 
ensure each column has a mean of $0$ and a standard error of $1$. 
We choose the true signal $\bm{\beta}^* \in \mathbb{R}^p$ such that 
$\bm{\beta}^*_j = 2 \mathrm{I}\{ j \leq s \}$, where $s$ is the 
sparsity parameter corresponding to the number of non-zero components.
Given $\bm{X}$ and $\bm{\beta}^*$, we generate $\bm{y} = \bm{X}\bm{\beta}^* + \sigma^* \bm{\epsilon}$ for 
$\bm{\epsilon} \sim \mathcal{N}(0,\bm{I}_n)$, where 
$\sigma^*=2$ is the Gaussian noise standard deviation. 

\paragraph{Synthetic binary response dataset in 
Section \ref{section:comparison}.} 
In Figures \ref{fig:time_comparison} and \ref{fig:stat_comparison}, 
synthetic binary classification datasets are considered. 
For number of observations $n$ and number of covariates $p$, 
we generate a design matrix $\bm{X} \in \mathbb{R}^{n \times p}$ such that 
each $[\bm{X}]_{i,j} \overset{i.i.d.}{\sim} \mathcal{N}(0, 1)$ for all 
$1 \leq i \leq n$ and $1 \leq j \leq p$, which is then scaled to 
ensure each column has a mean of $0$ and a standard error of $1$. 
We choose the true signal $\bm{\beta}^* \in \mathbb{R}^p$ such that 
$\bm{\beta}^*_j = 2^{\frac{9-j}{4}} \mathrm{I}\{ j \leq s \}$, 
where $s$ is the sparsity parameter corresponding to the number of non-zero components.
Given $\bm{X}$ and $\bm{\beta}^*$, we generate $y_i = \mathrm{I}\{ \tilde{y}_i > 0\}$ for
$\tilde{y}_i \sim \mathrm{Logistic}(\bm{x_i}^\top \bm{\beta}^*, 1)$ for $i=1\mydots,n$, 
where $\bm{x_i}^\top$ is the $i$-th row of $\bm{X}$ and 
$\mathrm{Logistic}(\bm{x_i}^\top \bm{\beta}^*, 1)$ is the Logistic distribution
with mean $\bm{x_i}^\top \bm{\beta}^*$ and scale parameter $1$. 

\paragraph{Datasets in Section \ref{section:applications}.} 

The Malware detection dataset from the UCI machine learning repository 
\citep{Dua2019UCI} has $n=373$ observations with binary responses and 
$p=503$ covariates, and is publicly available on 
\url{www.kaggle.com/piyushrumao/malware-executable-detection}.

The Borovecki, Chowdary, Chin and Gordon datasets are all high-dimensional
microarray datasets. They are publicly available 
on the $\mathrm{datamicroarray}$ package in $\mathrm{R}$. 
The Borovecki dataset has $n=31$ observations with binary responses and $p=22283$
covariates. 
The Chowdary dataset has $n=104$ observations with binary responses 
and $p=22283$ covariates. 
The Chin dataset has $n=118$ observations with binary 
responses and $p=22215$ covariates.
The Gordon dataset has $n=181$ observations with binary 
responses and $p=12533$ covariates.

The PCR GWAS dataset has $n=60$ observations with continuous responses
and $p=22575$ covariates, and is publicly available on 
\url{www.ncbi.nlm.nih.gov/geo} (accession number $GSE3330$). 
The Lymph Node GWAS dataset has $n=148$ observations with binary responses 
and $p=4514$ covariates, and has been previously considered
\citep{hans2007shotgunJASA,liang2013bayesianJASA,narisetty2019skinnyJASA}. 
The Maize GWAS dataset has $n=2266$ observations with continuous responses
and $p=98385$ covariates, and has been previously considered
\citep{romay2013genotypingGENOMEBIO, liu2016iterativePLOSGENETICS, zeng2017nonparametricNATURECOMM}.
The Lymph Node GWAS and the Maize GWAS datasets are not publicly available. 

The synthetic continuous dataset has $n=1000$ observations
and $p=50000$ covariates. The design matrix $\bm{X}$ is generates such that 
each $[\bm{X}]_{i,j} \overset{i.i.d.}{\sim} \mathcal{N}(0, 1)$ for all 
$1 \leq i \leq n$ and $1 \leq j \leq p$, which is then scaled to 
ensure each column has a mean of $0$ and a standard error of $1$. 
The true signal $\bm{\beta}^* \in \mathbb{R}^p$ is chosen such that 
$\beta^*_j = 2^{\frac{9-j}{4}}\mathrm{I}\{ j \leq s \}$, 
where $s$ is the sparsity parameter corresponding to the number of non-zero components.
Given $\bm{X}$ and $\bm{\beta}^*$, we generate $\bm{y} = \bm{X}\bm{\beta}^* + \sigma^* \bm{\epsilon}$ for 
$\bm{\epsilon} \sim \mathcal{N}(0,\bm{I}_n)$, where 
$\sigma^*=2$ is the Gaussian noise standard deviation. 
The synthetic binary classification dataset is generated as in 
Section \ref{section:comparison}, with $n=1000$ observations
and $p=50000$ covariates. 

\section{Additional Experiments}
\label{appendices:extra_experiments}

\paragraph{Variable selection performance as a function of time or number of iterations.}

\begin{figure}[!]
\vskip 0.2in
\begin{center}
\centerline{\includegraphics[width=1\columnwidth]{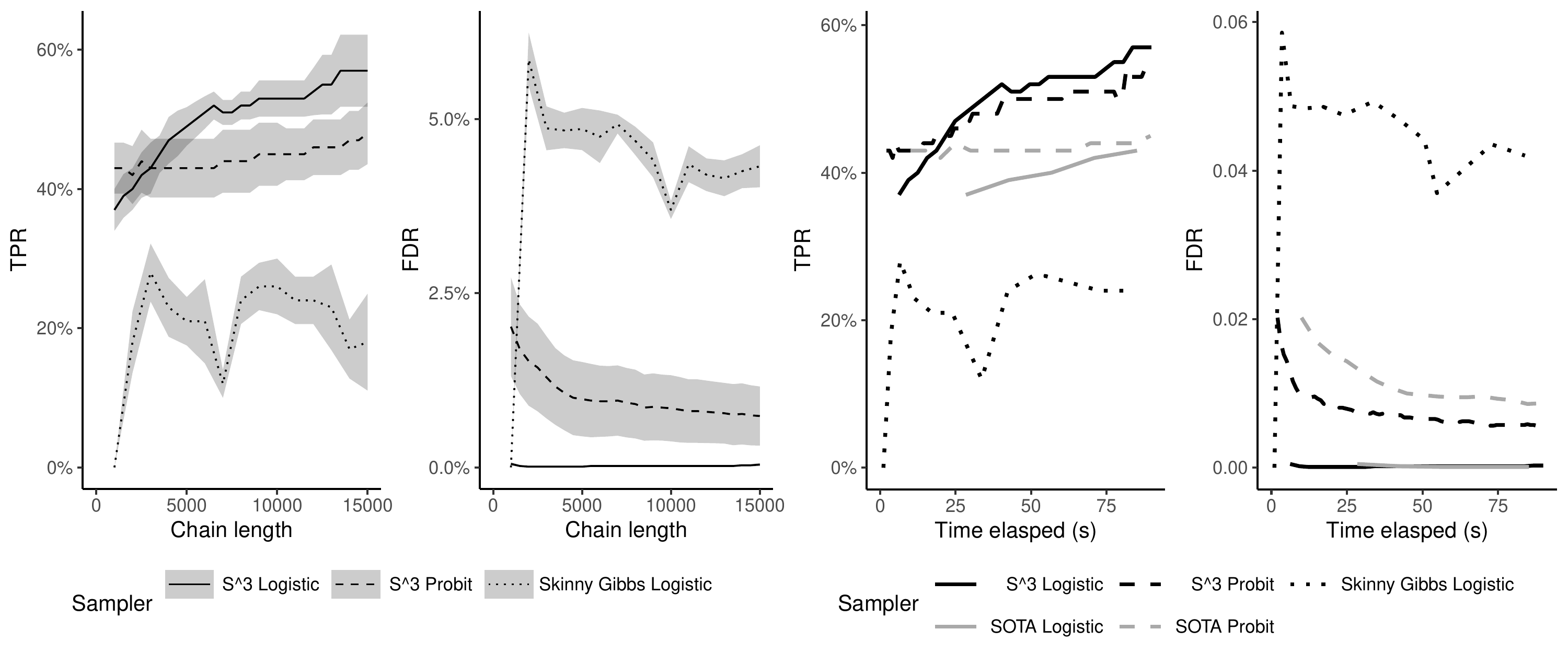}}
\vspace*{-0.05in}
\caption{
Avrage rue positive rate (TPR) and false discovery rate (FDR) plotted against 
the number of iterations and the total time elapsed in seconds. We consider 
 \sss, SOTA, and the Skinny Gibbs approximate sampler \citep{narisetty2019skinnyJASA}
applied to a synthetic binary classification dataset with $n=200$ observations and
$p=1000$ covariates. The TPR and FDR are averaged over 
$10$ independent chains, and one standard error bars are shown on the left and center-left plots and omitted on the right and center-right plots for visibility.
The SOTA sampler is omitted from the Left and Center-Left plots as its output has the same marginal 
distribution and statistical performance as \sss.
See Section \ref{appendices:extra_experiments} for details.
}
\label{fig:stat_comparison_over_time}
\end{center}
\vskip -0.2in
\end{figure}

Figure \ref{fig:stat_comparison_over_time} plots the average true positive rate (TPR) 
and the false discovery rate (FDR) of variable selection based on samples 
from \sss and Skinny Gibbs as the 
length of the chains are varied. The TPR and FDR are averaged over 
$10$ independent chains, and one standard error bars are shown. 
We consider a synthetic binary classification dataset
generated using a logistic regression model as in Section \ref{section:comparison}, 
with $n=200$ observations, $p=1000$ covariates, sparsity $s=10$, and an 
exponentially decaying sparse true signal $\bm{\beta}^* \in \mathbb{R}^p$ such that 
$\beta^*_j = 2^{\frac{9-j}{4}}$ for $j \leq s$ and $\beta^*_j=0$ for $j>s$.
The TPR and FDR are calculated as in Section \ref{section:comparison} with  
the marginal posterior probabilities $\pi_j \defeq \mathbb{P}_{\pi}(z_{j}=1)$  
now estimated by $\hat{\pi}_j \defeq \frac{1}{T-999} \sum_{t=\mathrm{1000}}^{\mathrm{T}} z_{j,t}$
for a burn-in of $1000$, a varying chain length $T \geq 1000$, and sample points $(\bm{z}_{t})_{t \geq 0}$ generated using \sss or Skinny Gibbs. We use the same prior 
hyperparameters for all the algorithms, which are chosen 
according to \citet{narisetty2019skinnyJASA}.

Figure \ref{fig:stat_comparison_over_time} Left and Center-Left plot the 
TPR and FDR against the chain length $T$. It shows that \sss for both logistic
and probit regression have higher TPR and lower FDR than Skinny Gibbs for all chain 
lengths. Furthermore, \sss for logistic regression has higher TPR and lower FDR than \sss for probit regression, 
which is expected as the synthetic dataset for this example is generated using a 
logistic regression model. The SOTA sampler is omitted from the Left and Center-Left plots as its output has the same marginal 
distribution and statistical performance as \sss.
Figure \ref{fig:stat_comparison_over_time} Center-Right and Right
plot the TPR and FDR against total time elapsed in seconds to generate samples 
using \sss, SOTA, or Skinny Gibbs chains with a burn-in of $1000$ iterations. The standard error bars are now omitted for better visibility. 
For each time budget, we observe better variable selection performance from \sss when compared with the slower SOTA implementation or with Skinny Gibbs.

\paragraph{Effective Sample Size of \sss for the datasets in Section \ref{section:applications}.}
Figure \ref{fig:ESS_plot} shows the Effective Sample Size per iteration 
and per unit of time (in seconds) of \sss and the SOTA sampler for the datasets in Section \ref{section:applications}.
The ESS is calculated using the $\mathrm{mcmcse}$ package \cite{Flegal2021mcmcse, vats2020multivariateBIOMETRIKA}
for one \sss chain of length $10000$ iterations with a burn-in of $1000$ iterations for each dataset.
The average ESS of the $\beta$ components are then plotted.
Figure \ref{fig:ESS_plot} Right shows that \sss has significantly higher ESS per second compared
to the corresponding SOTA sampler for all the datasets considered.

\begin{figure}[!]
\vskip 0.2in
\begin{center}
\centerline{\includegraphics[width=0.9\columnwidth]{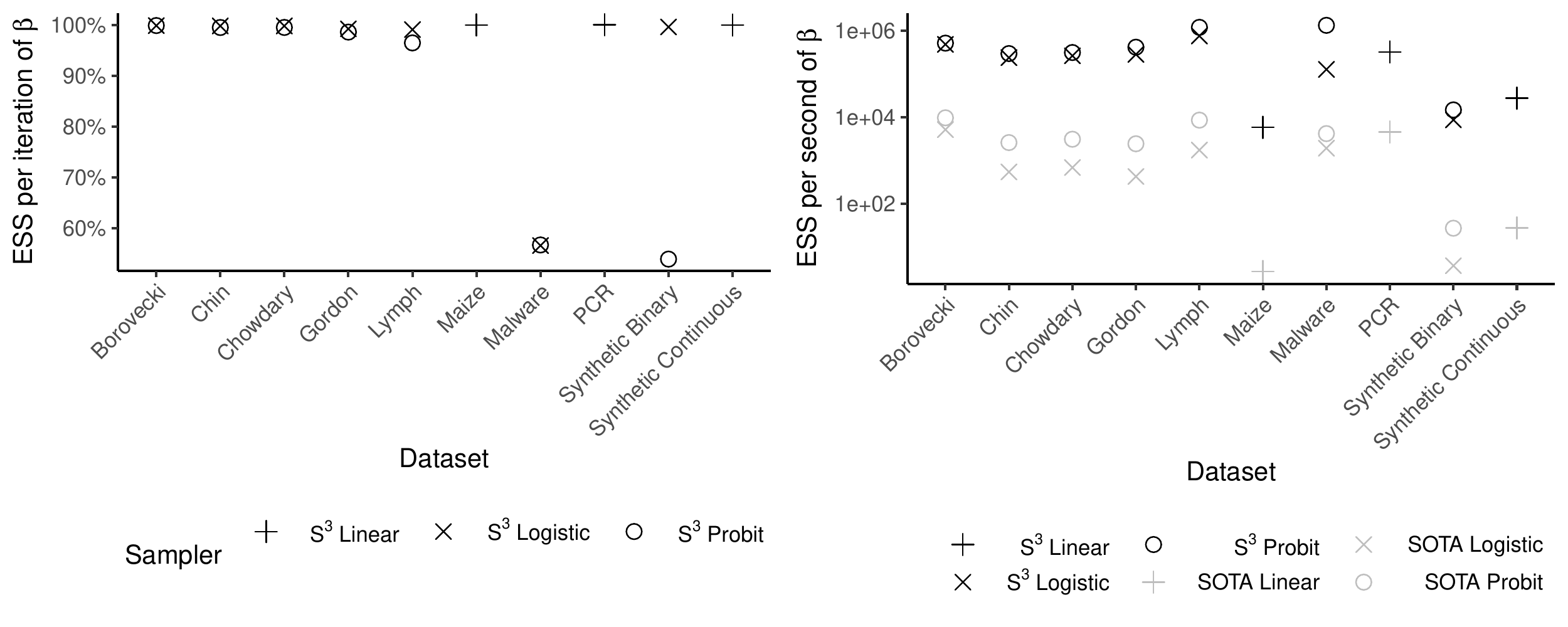}}
\vspace*{-0.05in}
\caption{
Effective sample size (ESS) per iteration and per second of \sss and the SOTA sampler for some of the datasets in Section \ref{section:applications}. The 
ESS is calculated using one \sss chain of length $10000$ iterations with a burn-in of $1000$ iterations.
The ESS per iteration of the SOTA sampler is omitted from the Left plot as it implements the same Gibbs sampler as \sss.
}
\label{fig:ESS_plot}
\end{center}
\vskip -0.2in
\end{figure}

\paragraph{Performance metrics for the datasets in Section \ref{section:applications}.}
Figures \ref{fig:Borovecki_plot}~--~\ref{fig:PCR_plot} show various performance metrics of 
\sss for some of the datasets considered in Section \ref{section:applications}. 
Figures \ref{fig:Borovecki_plot}~--~\ref{fig:PCR_plot} (Left) plot the marginal posterior 
probability estimates $\hat\pi_j$ against $j$ in the decreasing order of $\hat\pi_j$s, 
following the setup in Figure \ref{fig:applications1}. 
For datasets with continuous valued responses, $\hat\pi_j$s are based on 
samples from \sss for linear regression. 
For datasets with binary valued responses, $\hat\pi_j$s are based on 
samples from \sss for logistic and probit regression, and the Skinny Gibbs 
sampler from logistic regression.
We use samples from $5$ independent 
chains of length $10000$ iterations with a burn-in of $1000$ iterations.
Estimates based on samples from the SOTA sampler are not shown, 
as they implement the same Gibbs sampler as \sss
(other than possible numerical discrepancies, as discussed in Section
\ref{section:applications}). 

Figures \ref{fig:Borovecki_plot}~--~\ref{fig:PCR_plot} (Center) show the 
average time taken per iteration with one standard error bars for \sss, the SOTA sampler,  
and the Skinny Gibbs sampler based on $5$ independent chains of length $10000$ iterations.

Figures \ref{fig:Borovecki_plot}~--~\ref{fig:PCR_plot} (Right) show the 10-fold
cross-validation average root-mean-square error (RMSE) against the total time elapsed to run 
one \sss and one SOTA chain.
To compute this evaluation, we partition the observed dataset into $10$ folds uniformly at random and, for each fold $k$, run a chain conditioned on all data outside of fold $k$ and evaluate its performance on the held-out data in the $k$-th fold.
The average RMSE is calculated as 
$ \frac{1}{10} \sum_{k=1}^{10} r_k$, where $r_k$ is the RMSE for the
$k^{th}$ fold. For datasets with continuous valued responses, the 
quantities $r_k$ for linear regression are calculated as 
$(\frac{1}{|D_k|} \sum_{i \in D_k} (y_i - \hat{y}_i)^2)^{1/2}$ where 
$D_k$ is the $k^{th}$ fold, $\hat{y}_i \defeq \frac{1}{T-1000} \sum_{t=1001}^{T} x_i^T \beta_{t}$
are the predicted responses, and $(\beta_{t})_{t \geq 0}$ are samples from \sss and SOTA
targeting the posterior distribution of the $k^{th}$ training set.
For datasets with binary valued responses, the quantities $r_k$ 
are calculated as $(\frac{1}{|D_k|} \sum_{i \in D_k} (y_i - \hat{p}_i)^2)^{1/2}$, 
where $D_k$ is the $k^{th}$ fold, 
$\hat{p}_i \defeq \frac{1}{T-1000} \sum_{t=1001}^{T} \mathrm{Logistic}(x_i^T \beta_{t})$
and $\hat{p}_i \defeq \frac{1}{T-1000} \sum_{t=1001}^{T} \Phi(x_i^T \beta_{t})$ are the 
predicted probabilities for logistic and probit regression respectively, and $(\beta_{t})_{t \geq 0}$ are samples 
from \sss and SOTA targeting the posterior distribution of the $k^{th}$ training set.
Figures \ref{fig:Borovecki_plot}~--~\ref{fig:PCR_plot} (Right) plot the average RMSE 
against total time elapsed in seconds to generate samples using \sss or SOTA chains 
with a burn-in of $1000$ iterations.
The RMSE of the Skinny Gibbs sampler is not available, 
as the $\mathrm{skinnybasad}$ package does not output the full chain trajectories required
for RMSE calculations.

\begin{figure}[!]
\vskip 0.2in
\begin{center}
\centerline{\includegraphics[width=1\columnwidth]{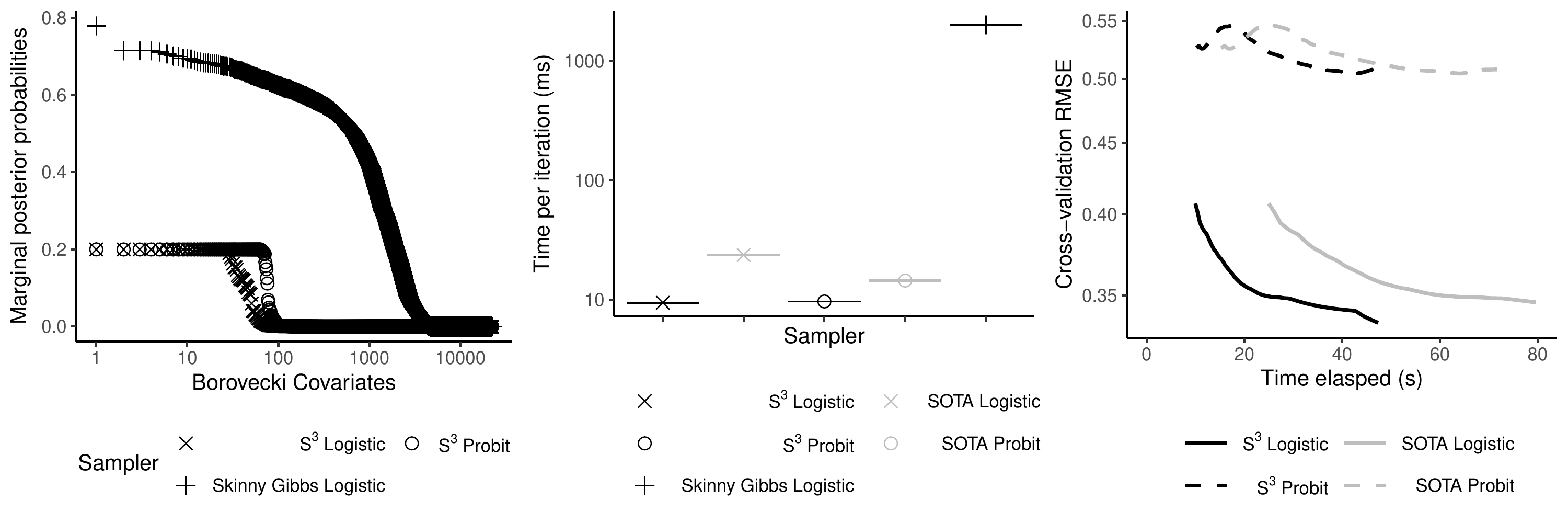}}
\vspace*{-0.05in}
\caption{
Borovecki dataset with $n=31$ observations, $p=22283$ covariates and binary valued responses.
}
\label{fig:Borovecki_plot}
\end{center}
\vskip -0.2in
\end{figure}

\begin{figure}[!]
\vskip 0.2in
\begin{center}
\centerline{\includegraphics[width=1\columnwidth]{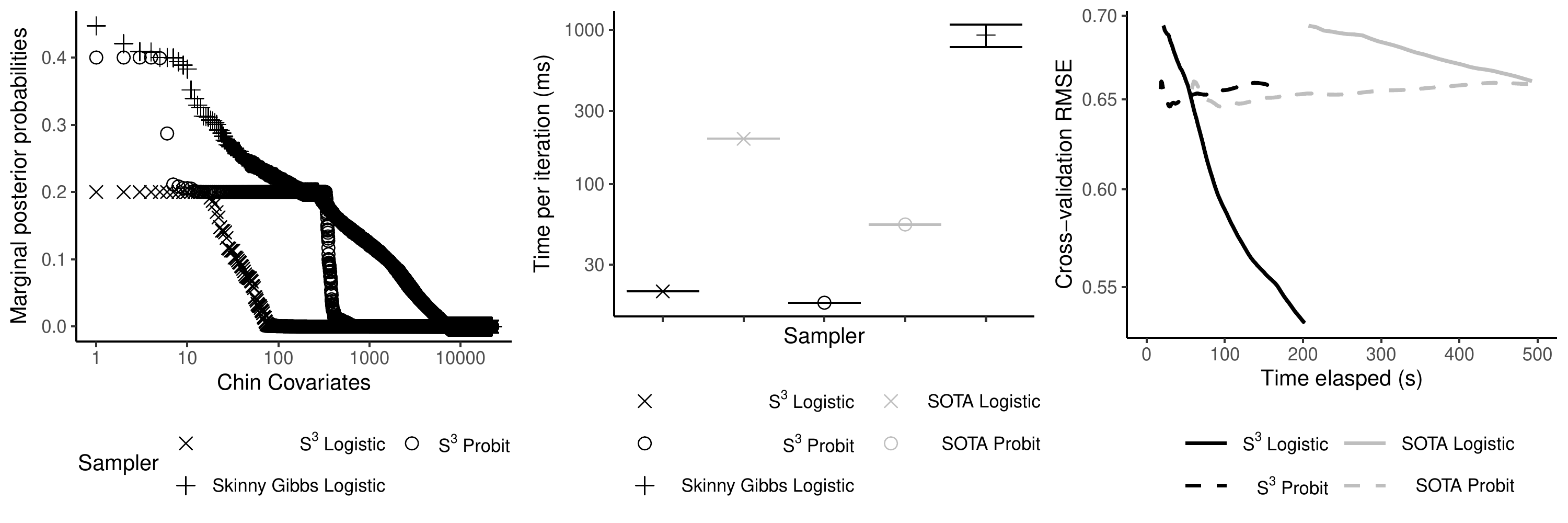}}
\vspace*{-0.05in}
\caption{
Chin dataset with $n=118$ observations, $p=22215$ covariates and binary valued responses.
}
\label{fig:Chin_plot}
\end{center}
\vskip -0.2in
\end{figure}

\begin{figure}[!]
\vskip 0.2in
\begin{center}
\centerline{\includegraphics[width=1\columnwidth]{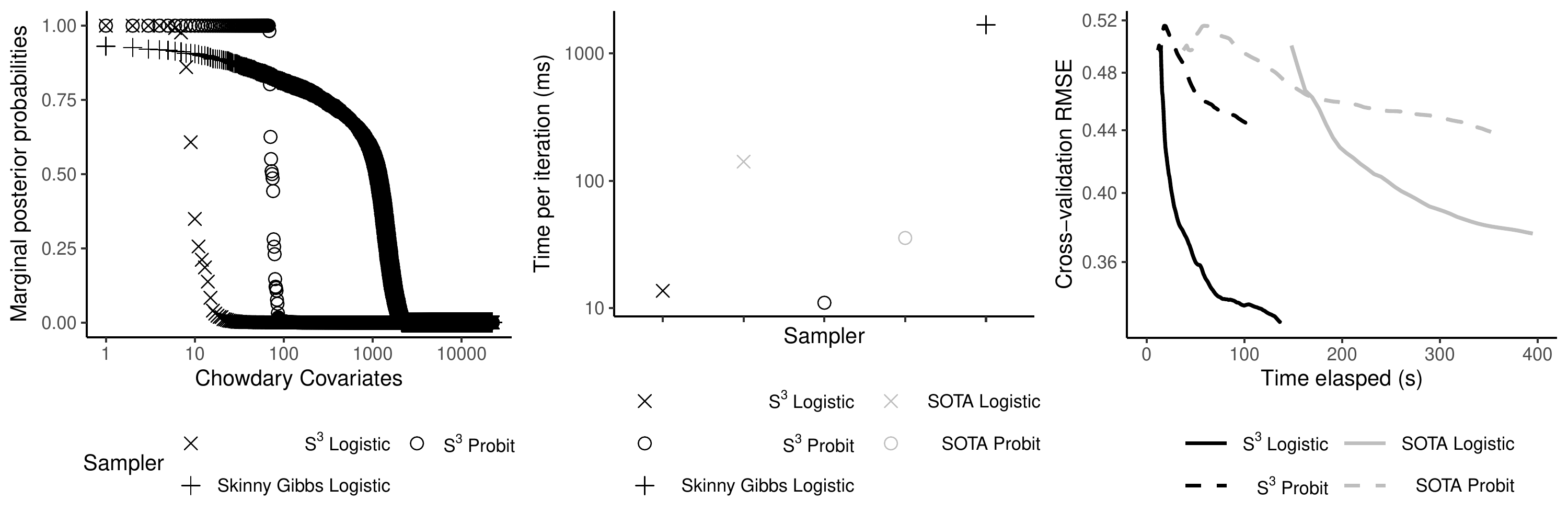}}
\vspace*{-0.05in}
\caption{
Chowdary dataset with $n=104$ observations, $p=22283$ covariates and binary valued responses.
}
\label{fig:Chowdary_plot}
\end{center}
\vskip -0.2in
\end{figure}

\begin{figure}[!]
\vskip 0.2in
\begin{center}
\centerline{\includegraphics[width=1\columnwidth]{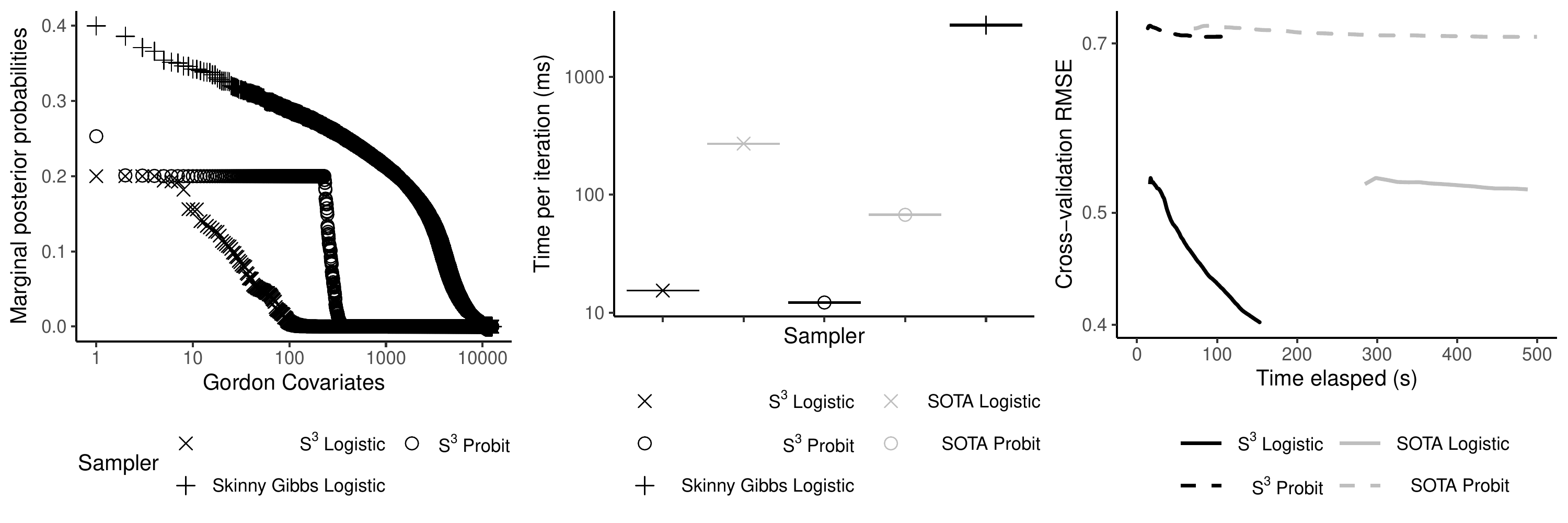}}
\vspace*{-0.05in}
\caption{
Gordon dataset with $n=181$ observations, $p=12533$ covariates and binary valued responses.
}
\label{fig:Gordon_plot}
\end{center}
\vskip -0.2in
\end{figure}

\begin{figure}[!]
\vskip 0.2in
\begin{center}
\centerline{\includegraphics[width=1\columnwidth]{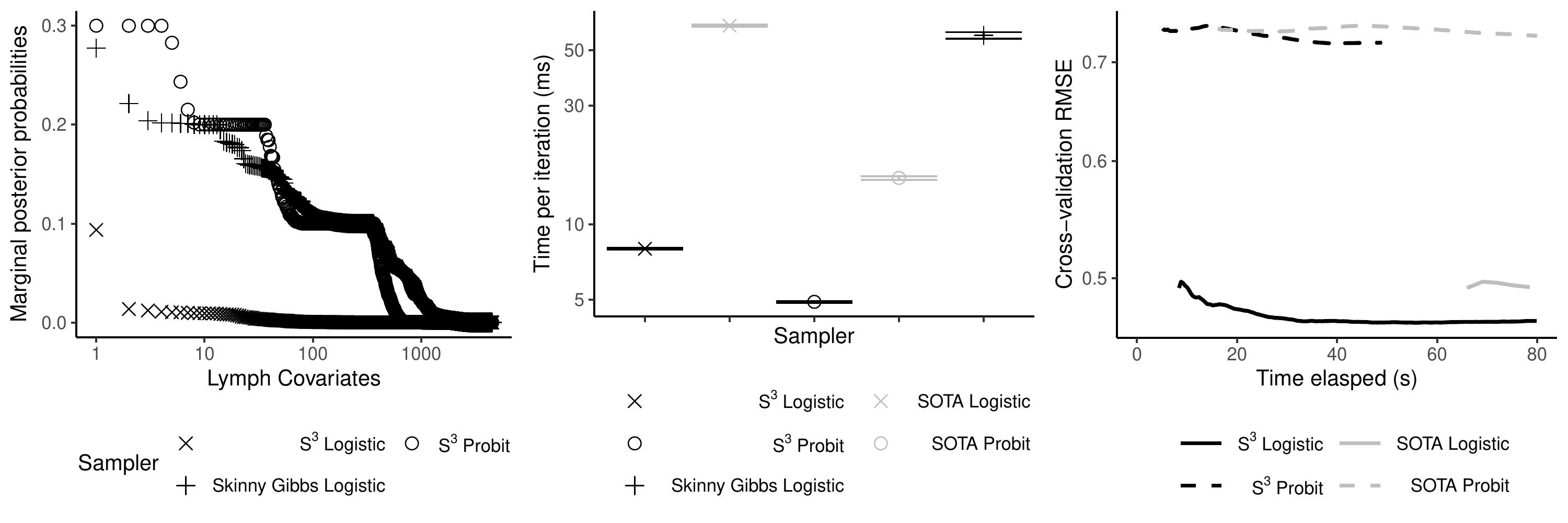}}
\vspace*{-0.05in}
\caption{
Lymph dataset with $n=148$ observations, $p=4514$ covariates and binary valued responses.
}
\label{fig:Lymph_plot}
\end{center}
\vskip -0.2in
\end{figure}

\begin{figure}[!]
\vskip 0.2in
\begin{center}
\centerline{\includegraphics[width=1\columnwidth]{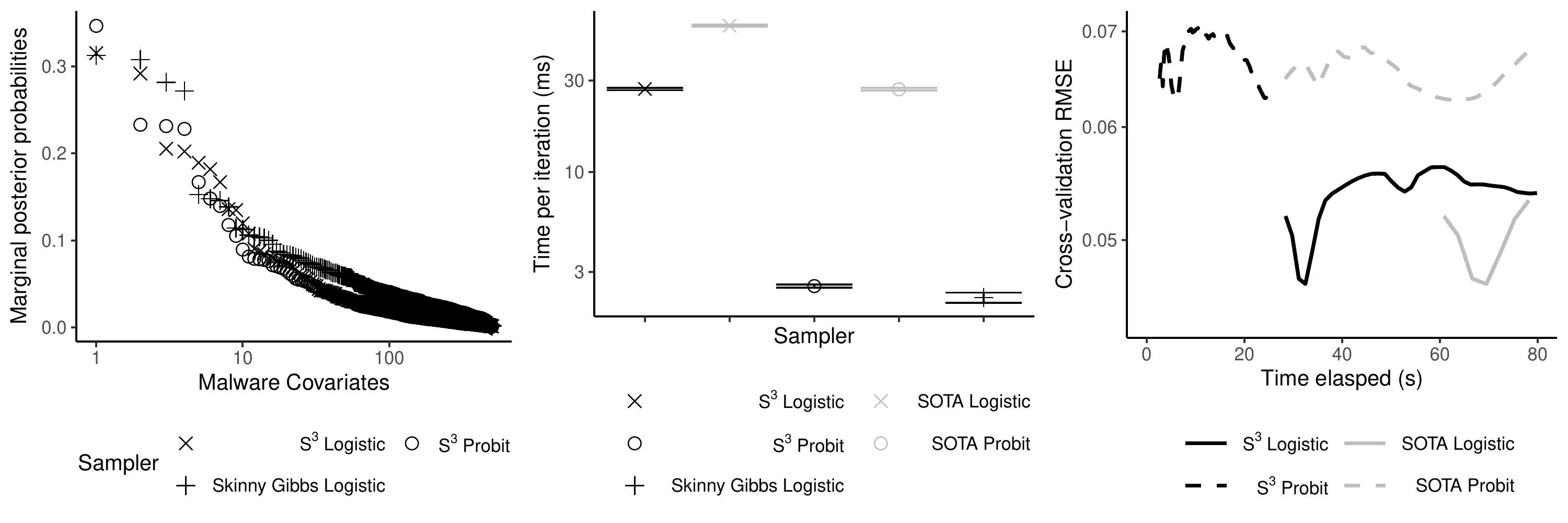}}
\vspace*{-0.05in}
\caption{
Malware dataset with $n=373$ observations, $p=503$ covariates and binary valued responses.
}
\label{fig:Malware_plot}
\end{center}
\vskip -0.2in
\end{figure}

\begin{figure}[!]
\vskip 0.2in
\begin{center}
\centerline{\includegraphics[width=1\columnwidth]{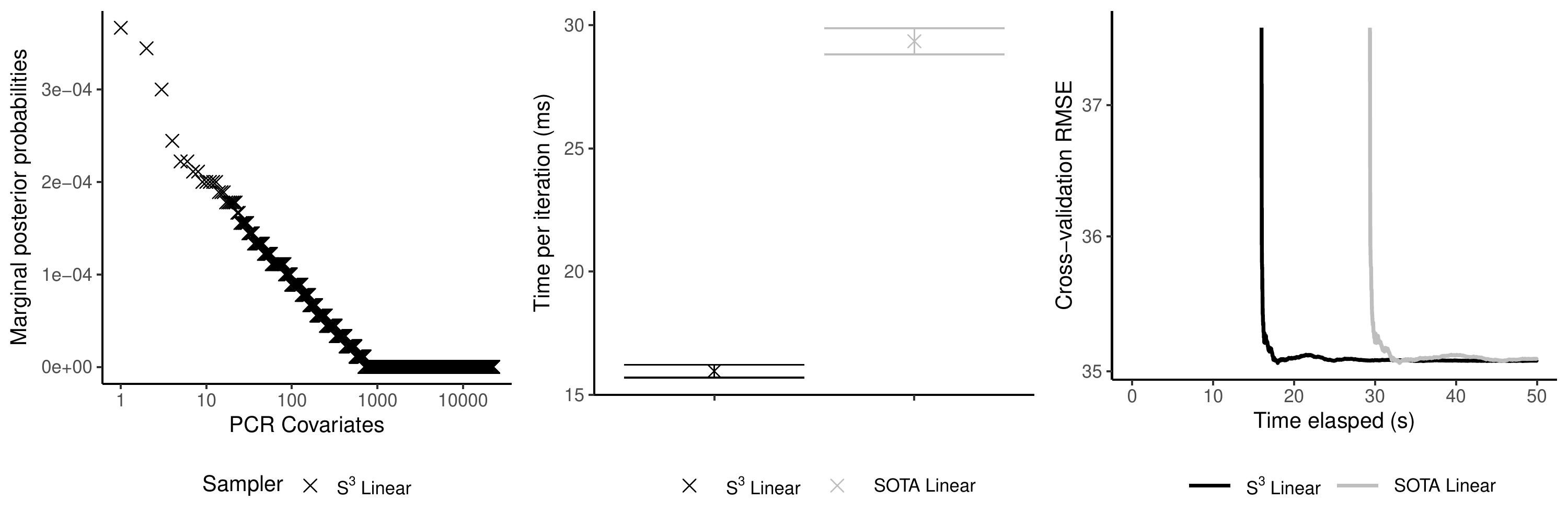}}
\vspace*{-0.05in}
\caption{
PCR dataset with $n=60$ observations, $p=22575$ covariates and continuous valued responses.
}
\label{fig:PCR_plot}
\end{center}
\vskip -0.2in
\end{figure}

\end{document}